\theoremstyle{plain}
\newtheorem{theorem}{Theorem}[section]
\theoremstyle{plain}
\newtheorem{proposition}[theorem]{Proposition}
\newtheorem*{proposition*}{Proposition}
\newtheorem{remark}[theorem]{Remark}
\newtheorem*{theorem*}{Theorem}
\newtheorem{lemma}[theorem]{Lemma}
\newtheorem{assumption}[theorem]{Assumption}
\newtheorem*{assumption*}{Assumption}
\newtheorem{corollary}[theorem]{Corollary}
\theoremstyle{definition} 
\newtheorem{definition}[theorem]{Definition}
\numberwithin{equation}{section}
\def\mainmatter{\def\baselinestretch{1.1}\normalfont}
\renewcommand{\section}{\@startsection
{section}
{1}
{\z@}
{-\baselineskip}
{0.8\baselineskip}
{\centering\scshape\large}} 
\renewcommand{\subsection}{\@startsection
{subsection}
{2}
{\z@}
{-0.8\baselineskip}
{0.5\baselineskip}
{\normalfont \bf \normalsize}} 
\renewcommand{\subsubsection}{\@startsection
{subsubsection}
{3}
{\z@}
{-0.8\baselineskip}
{0.5\baselineskip}
{\normalfont \it \normalsize}} 
\newcommand{\eprint}[1]{\href{http://arxiv.org/abs/#1}{\texttt{arXiv\string:\allowbreak#1}}}
\begin{document}

\title{Topological Recursion in the Ramond Sector}

\author{Kento Osuga}
\address{Theoretical Physics Institute, Department of Physics,
University of Alberta, 4-181 CCIS,
Edmonton, Alberta, T6G 2E1, Canada}
\address{School of Mathematics and Statistics,
University of Sheffield,
The Hicks Building,
Hounsfield Road,
Sheffield, S3 7RH,
United Kingdom}
\email{osuga@ualberta.ca}

\begin{abstract}

We investigate supereigenvalue models in the Ramond sector and their recursive structure. We prove that the free energy truncates at quadratic order in Grassmann coupling constants, and consider super loop equations of the models with the assumption that the $1/N$ expansion makes sense. Subject to this assumption, we obtain the associated genus-zero algebraic curve with two ramification points (one regular and the other irregular) and also the supersymmetric partner polynomial equation. Starting with these polynomial equations, we present a recursive formalism that computes all the correlation functions of these models. Somewhat surprisingly, correlation functions obtained from the new recursion formalism have no poles at the irregular ramification point due to a supersymmetric correction -- the new recursion may lead us to a further development of supersymmetric generalizations of the Eynard-Orantin topological recursion.

\end{abstract}

\maketitle
\tableofcontents
\mainmatter

\newpage
\section{Introduction}\label{sec:intro}

Hermitian matrix models are in some sense the simplest gauge quantum field theories, namely in zero dimensions. Despite their simplicity, however, they give rise to a variety of rich applications in both physics and mathematics. What particularly interesting is a recursive formalism called topological recursion which was originally introduced in \cite{CEO} as a technique that solves loop equations for Hermitian matrix models. Shortly after, it was significantly generalized by \cite{Alexandrov:2006qx,EO,EO2} beyond matrix models and it has been applied in various contexts in mathematical physics such as Gromov-Witten invariants, Hurwitz numbers, knot invariants in Chern-Simons theory, and more (see \cite{BKMP,BM,BEM,DBOSS,EMS,EO3,FLZ2,FLZ3,GJKS,Ma} and references therein). Such an abstract recursive formalism has become known as the so-called Eynard-Orantin topological recursion\footnote{Even though some methods of solving the loop equations were known before the formulation of Eynard and Orantin, we refer to the universal recursive formalism that not only solves the loop equations but works beyond matrix models as the \emph{Eynard-Orantin topological recursion}.}. 

Another important property of Hermitian matrix models is that the partition function obeys the so-called Virasoro constraint. From this perspective, we can consider super-generalized models, known as supereigenvalue models, such that their partition function satisfies the super Virasoro constraint (see \cite{AP,AlvarezGaume:1991jd,AlvarezGaume:1992mm,Beckers,BO,C1, C2,Kroll,McArthur,Plefka1,Plefka2,Plefka:1996tt} for the Neveu-Schwarz (NS) sector and \cite{C} for the Ramond sector). Then one may ask whether a similar recursive structure also appears in these models. And furthermore if so, can such a recursive formalism lead us to interesting applications in mathematics?

For the NS sector, \cite{BO} showed that all correlation functions can be recursively computed by the Eynard-Orantin topological recursion in conjunction with an auxiliary Grassmann-valued polynomial equation, which can be thought of as the supersymmetric partner of the algebraic curve for the Eynard-Orantin topological recursion. This is because of great simplification thanks to Becker's formula \cite{Beckers,McArthur} that gives an explicit relation to Hermitian matrix models. At the same time, supereigenvalue models in the NS sector are so highly constrained by Becker's formula that there are only few hints to consider a supersymmetric generalization of topological recursion. As an extension of their work, in this paper we investigate supereigenvalue models in the Ramond sector introduced in \cite{C} and uncover their recursive structure.

In fact, we find several interesting results in supereigenvalue models in the Ramond sector. First of all, we prove that the free energy for the Ramond sector depends only on Grassmann couplings up to quadratic order, similar to that for the NS sector. However, they do not seem to relate to Hermitian matrix models, which causes trouble with justifying the $1/N$ expansion of the partition function and the free energy. Therefore, we rather \emph{assume} that the $1/N$ expansion makes sense in those models, and derive their super loop equations. By analyzing the super loop equations, we obtain an algebraic curve of genus zero where one of the ramification point is regular (Airy-like) and the other is irregular (Bessel-like), and we also derive the supersymmetric partner which is a Grassmann-valued polynomial equation defined on the ordinal algebraic curve (not on a super Riemann surface). Starting with the two polynomial equations, we present a new recursive formalism that computes all the correlation functions of supereigenvalue models in the Ramond sector where some of corretaion functions are Grassmann-valued. Supersymmetric corrections contribute to higher genus correlation functions, and most importantly, no correlation functions have a pole at the irregular ramification point due to a supersymmetric correction. Hence, the recursive formalism evidently differs from the Eynard-Orantin topological recursion unlike what is shown in \cite{BO} for the NS sector. We summarize the results of this paper in Theorem~\ref{thm:main}.

We are working in a more abstract framework that potentially unifies both the NS sector and Ramond sector into one package. However, since such abstraction is beyond the scope of this paper, we leave more discussion to a paper in progress \cite{BO2}. We also note that it is an open question whether the new recursive formalism shown in this paper (or perhaps the more abstract one in \cite{BO2}) has interesting applications beyond the study of supereigenvalue models.

This paper is organized as follows. In Section 2, we briefly review a definition of supereigenvalue models in the Ramond sector, and study the properties of the partition function and the free energy. With the assumption of the $1/N$ expansion, the bosonic and fermionic loop equations are derived in Section 3. Then in Section 4, we focus on the recursive structure of the loop equations independent of Grassmann couplings, and present a recursive formalism that can be thought of as a generalization of the Eynard-Orantin topological recursion. Section 5 is devoted to correlation functions that depend on Grassmann couplings, and also to the role of a supersymmetric partner polynomial equation. In Section 6, we outline a variety of open questions and future work, particularly a recursive formalism potentially unifying both the NS sector and Ramond sector.

\subsection*{Acknowledgements}

We would like to thank Vincent Bouchard for insightful discussions. We also acknowledge Nitin Chidambaram, Don N. Page, and Piotr Su\l kowski for helpful advice. This research was supported by the Natural Sciences and Engineering Research Council of Canada, and also by the Engineering and Physical Sciences Research
Council under grant agreement ref. EP/S003657/1.

\section{Supereigenvalue Models in the Ramond Sector}\label{sec:SEM}

In this section, we define a \emph{formal} supereigenvalue model in the Ramond sector. A general construction of such models are given in \cite{C} which considers two distinct types of supereigenvalue models in the Ramond sector. We are only interested in what in \cite{C} is called the Ramond-NS eigenvalue model, and leave the other type, the Ramond-Ramond eigenvalue model, for future work. Also, we restrict our focus on models without $\alpha/\beta$-deformation. Even such restricted models give rise to interesting results as outlined in Section~\ref{sec:intro}. Discussions and computations in the paper closely follow the presentation of \cite{BO,Eynard}.

\subsection{Definitions}
For nonnegative integers $k,l\in\mathbb{Z}_{\geq0}$, let $g_k,\xi_l$ be formal bosonic and fermionic (Grassmann) coupling constants. Then, we define power series $V(x),\Psi(x)$, which we call the bosonic and fermionic potential respectively, by
\begin{equation}
V(x)=Tx+\sum_{k\geq0}g_kx^k,\;\;\;\;\Psi(x)=\sum_{l\geq0}\xi_lx^l,\label{potential}
\end{equation}
where $T\in\mathbb{R}_{>0}$ is a new parameter. Then, the partition function $\mathcal{Z}$ of a \emph{formal supereigenvalue model in the Ramond sector} is defined as\footnote{This integral representation \eqref{Z1} is slightly different from the original one given in \cite{C}. We will explain why we take this representation shortly.}
\begin{equation}
\mathcal{Z}=\prod_{k,l\geq0}\sum_{n_k,m_l\geq0}\int d\lambda d\theta \Delta(\lambda,\theta)\frac{1}{n_k!m_l!}\left(-\frac{N}{t}\sum_{i=1}^{2N}g_k\lambda_i^{2k}\right)^{n_k}\left(-\frac{N}{t}\sum_{i=1}^{2N}\xi_l\theta_i\lambda_i^{2l}\right)^{m_l}\;e^{-\frac{NT}{2t}\sum_{i=1}^{2N}\lambda_i^2},\label{Z1}
\end{equation}
\begin{equation}
d\lambda=\prod_{i=1}^{2N} d\lambda_i,\;\;\;\;d\theta=\prod_{i=1}^{2N} d\theta_i,
\end{equation}
where $N\in\mathbb{N}$, $t\in\mathbb{R}_{>0}$ are also new parameters, and $\Delta(\lambda,\theta)$ will be determined shortly. We integrate every $\lambda_i$ over $\mathbb{R}$, and $\theta_i$ are Grassmann variables. $\mathcal{Z}$ depends on $N,t,g_k,\xi_l,T$, but we omit to explicitly denote those dependence for simplicity. In general summation and integral in \eqref{Z1} do not commute. For convention, however, we often denote the partition function by
\begin{equation}
\mathcal{Z}\overset{\text{formal}}{=}\int d\lambda d\theta \Delta(\lambda,\theta) e^{-\frac{N}{t}\sum_{i=1}^{2N}(V(\lambda_i^2)+\Psi(\lambda_i^2)\theta_i)},
\end{equation}
with the understanding that the summation is taken outside of the integral as originally defined in \eqref{Z1}.

Note that we will eventually require the bosonic and fermionic potentials $V(x),\Psi(x)$ to be polynomials of some degrees when we investigate a recursive structure of the models. However, it is more convenient to leave them as power series for now.

\subsubsection{Super Virasoro Constraints}

We define super Virasoro differential operators $L_n,G_m$ for $n,m\in\mathbb{Z}_{\geq0}$ in terms of $(g_k,\xi_l)$ that generate a super Virasoro subalgebra in the Ramond sector:
\begin{align}
L_n=&T\frac{\partial}{\partial g_{n+1}}+\sum_{k\geq0}kg_k\frac{\partial}{\partial g_{k+n}}+\frac{1}{2}\left(\frac{t}{N}\right)^2\sum_{k=0}^n\frac{\partial}{\partial g_k}\frac{\partial}{\partial g_{n-j}}+\frac{1}{16}\delta_{n,0}\nonumber\\
&+\sum_{k\geq0}\left(\frac{n}{2}+k\right)\xi_{k}\frac{\partial}{\partial\xi_{n+k}}+\frac{1}{2}\left(\frac{t}{N}\right)^2\sum_{k=0}^{n}\left(\frac{n}{2}-k\right)\frac{\partial}{\partial\xi_{k}}\frac{\partial}{\partial\xi_{n-k}}-\frac{n}{4}\left(\frac{t}{N}\right)^2\frac{\partial}{\partial\xi_0}\frac{\partial}{\partial\xi_{n}},\label{L}\\
G_n=&T\frac{\partial}{\partial \xi_{n+1}}+\sum_{k\geq0}\left(kg_k\frac{\partial}{\partial\xi_{n+k}}+\xi_{k}\frac{\partial}{\partial g_{n+k}}\right)-\left(\frac{t}{2N}\right)^2\frac{\partial}{\partial \xi_0}\frac{\partial}{\partial g_n}+\sum_{k=0}^{n}\left(\frac{t}{N}\right)^2\frac{\partial}{\partial\xi_k}\frac{\partial}{\partial g_{n-k}},\label{G}
\end{align}
\begin{align}
[L_m, L_n]&=(m-n)L_{m+n}, \nonumber \\
[L_m, G_r]&=\frac{m-2r}{2}G_{m+r} \label{algebra}, \\
\{G_r, G_s\}&=2L_{r+s}-\frac{1}{8}\delta_{r+s,0}.\nonumber
\end{align}
Note that we can obtain differential operators for the full super Virasoro algebra, but we only need the subalgebra to define a formal supereigenvalue model in the Ramond sector.

Similar to supereigenvalue models in the NS sector, we would like to impose the partition function \eqref{Z1} to be annihilated by the super Virasoro operators \eqref{L} and \eqref{G}. In other words, the partition function can be viewed as a highest weight state in the Ramond sector:
\begin{align}
G_n\mathcal{Z}&=0,\\
L_n\mathcal{Z}&=\delta_{n,0}\frac{1}{16}\mathcal{Z},
\end{align}
which is the so-called \emph{super Virasoro constraint}. Note that the second equality is automatically satisfied as a consequence of $G_n\mathcal{Z}=0$ because of the super Virasoro subalgebra. Plugging the definition of the partition function \eqref{Z1} into the super Virasoro constraint, it can be shown that $\Delta(\lambda,\theta)$ is uniquely determined, up to an overall normalization, as
\begin{equation}
\Delta(\lambda,\theta)=\prod_{i<j}^{2N}\left(\lambda_i^2-\lambda_j^2-\frac{\theta_i\theta_j}{2}(\lambda_i^2+\lambda_j^2)\right).\label{D}
\end{equation}
\begin{proof}
We provide two approaches. The first one is simply by re-parametrization of integration variables of the partition function given in Section 4, \cite{C}. Explicitly, \cite{C} showed that the partition function is given in the form:
\begin{equation}
\mathcal{Z}\overset{\text{formal}}{=}\int d\lambda d\theta \Delta(\lambda,\theta) \exp\left(-\frac{N}{t}\sum_{i=1}^{2N}(V(\lambda_i)+\Psi(\lambda_i)\frac{\theta_i}{\lambda_i^{\frac12}})\right),
\end{equation}
\begin{equation}
\Delta(\lambda,\theta)=\prod_{i<j}^{2N}\left(\lambda_i-\lambda_j-\frac{\lambda_i+\lambda_j}{2\lambda_i^{\frac12}\lambda_j^{\frac12}}\theta_i\theta_j\right),
\end{equation}
where all $\lambda_i$'s are nonnegative real variables. If we re-parametrize them as follows
\begin{equation}
\lambda_i\rightarrow\lambda_i^2,\;\;\;\;\lambda_i^{-\frac{1}{2}}\theta_i\rightarrow\theta_i,
\end{equation}
then the integrand depends only on even powers of $\lambda_i$. Thus, we can extend $\lambda_i$ to be in $\mathbb{R}$. This derives \eqref{D}.

We can directly prove \eqref{D} too. The super Virasoro constraint $G_n\mathcal{Z}=0$ induces a set of differential equations that $\Delta(\lambda,\theta)$ should obey:
\begin{equation}
\forall n\in\mathbb{Z}_{\geq0},\;\;\;\;\sum_{i=1}^{2N}\lambda_i^{2n}\left(\frac{\lambda_i\theta_i}{2}\frac{\partial}{\partial \lambda_i}-\frac{\partial}{\partial\theta_i}\right)\Delta(\lambda,\theta)=\Delta(\lambda,\theta)\sum_{i\neq j}\theta_i\frac{2\lambda_i^{2n+2}-\lambda_i^2\lambda_j^{2n}-\lambda_j^{2n+2}}{2(\lambda_i^2-\lambda_j^2)}.
\end{equation}
Then, one can show by induction in $2N\geq2$ that \eqref{D} is a unique solution up to an overall constant similar to the proof for supereigenvalue models in the NS sector.
\end{proof}

\begin{remark}
In contrast to the definition given in \cite{C}, our definition \eqref{Z1} is a Gaussian integral representation, hence, several computational techniques developed for Hermitian matrix models can be easily carried on. Therefore, we will proceed with the definition \eqref{Z1}.
\end{remark}

\subsection{Truncation}\label{sec:truncation}
From this section and below, we set $T=1$. Let us define the free energy of a formal supereigenvalue model in the Ramond sector as
\begin{equation}
\mathcal{F}=\log\mathcal{Z}.\label{F1}
\end{equation}
\cite{McArthur} proved (see also Appendix A of \cite{BO}) that the free energy of those in the NS sector depends on Grassmann couplings $\xi^{NS}_{l+\frac12}$ only up to quadratic order. We now prove that this feature also holds in the Ramond sector:

\begin{proposition}\label{prop:trun}
The free energy $\mathcal{F}$ depends on Grassmann couplings $\xi_l$ only up to quadratic order:
\begin{equation}
\mathcal{F}=\mathcal{F}^{(0)}+\mathcal{F}^{(2)},
\end{equation}
where the superscript denotes the order of $\xi_l$-dependence.
\end{proposition}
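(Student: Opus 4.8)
The plan is to expose a Gaussian structure in the Grassmann variables $\theta_i$, integrate them out, and then show that the logarithm only sees the two-point data.

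First I would record the trivial half of the statement: since $\mathcal{Z}$, and hence $\mathcal{F}=\log\mathcal{Z}$, is Grassmann-even in the $\xi_l$, every odd power of the $\xi_l$ already vanishes, so the content is that all terms of order $4,6,8,\dots$ cancel. Writing $\mathcal{O}_l=-\frac{N}{t}\sum_{i=1}^{2N}\lambda_i^{2l}\theta_i$ so that $-\frac{N}{t}\sum_i\Psi(\lambda_i^2)\theta_i=\sum_l\xi_l\mathcal{O}_l$, the free energy is the cumulant generating functional
\begin{equation}
\mathcal{F}=\log\left\langle \exp\Big(\sum_l\xi_l\mathcal{O}_l\Big)\right\rangle,\qquad \mathcal{F}^{(2p)}=\frac{1}{(2p)!}\sum_{l_1,\dots,l_{2p}}\xi_{l_1}\cdots\xi_{l_{2p}}\,\langle\mathcal{O}_{l_1}\cdots\mathcal{O}_{l_{2p}}\rangle_c,
\end{equation}
where $\langle\,\cdot\,\rangle$ is the expectation against the $\xi$-independent weight $\Delta(\lambda,\theta)\,e^{-\frac{N}{t}\sum_i V(\lambda_i^2)}$ and the subscript $c$ denotes the connected part. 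Thus Proposition~\ref{prop:trun} is equivalent to the vanishing of every connected correlator $\langle\mathcal{O}_{l_1}\cdots\mathcal{O}_{l_{2p}}\rangle_c$ of more than two fermionic insertions.

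Next I would integrate out the $\theta_i$. Factoring the deformed Vandermonde and using $(\theta_i\theta_j)^2=0$ gives $\Delta(\lambda,\theta)=\Delta_0(\lambda)\exp\big(\sum_{i<j}B_{ij}\theta_i\theta_j\big)$ with $\Delta_0=\prod_{i<j}(\lambda_i^2-\lambda_j^2)$ and $B_{ij}=-\tfrac12\,\frac{\lambda_i^2+\lambda_j^2}{\lambda_i^2-\lambda_j^2}$ antisymmetric, so the $\theta$-integrand is Gaussian with a source linear in $\theta$ whose coefficients are the Grassmann-odd $\Psi(\lambda_i^2)$. The Berezin Gaussian integral then yields
\begin{equation}
\mathcal{Z}=\int d\lambda\;\mu(\lambda)\,e^{W(\lambda,\xi)},\qquad W=\tfrac12\big(\tfrac{N}{t}\big)^2\sum_{i,j}\Psi(\lambda_i^2)(B^{-1})_{ij}\Psi(\lambda_j^2)
\end{equation}
where $\mu(\lambda)=\Delta_0(\lambda)\,\mathrm{Pf}(B)\,e^{-\frac{N}{t}\sum_i V(\lambda_i^2)}$ is independent of the $\xi_l$ and $W$ is \emph{exactly} quadratic in them, being the $\xi$-bilinear produced by completing the square. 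Equivalently, at fixed $\lambda$ the Gaussian (free) nature of the $\theta$-sector forces the connected $\theta$-correlators beyond two points to vanish, so $\langle\mathcal{O}_{l_a}\mathcal{O}_{l_b}\rangle_\theta=(\tfrac{N}{t})^2\sum_{ij}\lambda_i^{2l_a}\lambda_j^{2l_b}(B^{-1})_{ij}$ is the full propagator and every higher $\theta$-correlator is its Pfaffian.

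The remaining and decisive step, which I expect to be the main obstacle, is to propagate this through the $\lambda$-integral: even though the $\theta$-sector is free, the interacting $\lambda$-measure $\mu$ can a priori generate connected pieces, so I must show that the $\xi$-antisymmetrized averages of these Pfaffians collapse. The mechanism is the index-shift (rank-two) structure of the fermionic two-point function. This is already transparent for $2N=2$, where $\mathcal{Z}=Z^{(0)}+Z^{(2)}$ and the only nontrivial correlator is built from $A_{lm}=m_{l+1}m_m-m_l m_{m+1}$ with $m_a=\int e^{-\frac{N}{t}V(\lambda^2)}\lambda^{2a}d\lambda$; the four-index combination $A_{l_1 l_2}A_{l_3 l_4}-A_{l_1 l_3}A_{l_2 l_4}+A_{l_1 l_4}A_{l_2 l_3}$ then cancels identically because $A$ is the wedge of the moment vector with its own shift, so $\mathcal{F}^{(4)}=0$ directly. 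For general $N$ I would promote this to a Pfaffian identity for the full covariance and prove it by induction on $2N$, exactly as in the determination of $\Delta(\lambda,\theta)$ above; alternatively one can package the same cancellation through the fermionic super Virasoro constraints $G_n\mathcal{Z}=0$, whose leading term $\partial_{\xi_{n+1}}$ implements precisely the moment shift $m_l\mapsto m_{l+1}$ responsible for the collapse. Establishing this vanishing of the antisymmetrized higher correlators for all $N$ is the crux of the argument.
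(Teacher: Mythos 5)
Your reduction is sound as far as it goes: factoring $\Delta(\lambda,\theta)=\Delta_0(\lambda)\exp\bigl(\sum_{i<j}B_{ij}\theta_i\theta_j\bigr)$ and performing the Berezin Gaussian integral to get an effective action $W$ that is exactly quadratic in the $\xi_l$ at fixed $\lambda$ is correct, and you are right that this alone proves nothing, since the interacting $\lambda$-measure can still generate connected four-point (and higher) pieces in $\log\mathcal{Z}$. But that step --- which you yourself label ``the crux'' --- is precisely what you do not supply, so the proposal has a genuine gap exactly where the theorem lives. Your $2N=2$ computation does work (there $Z^{(2)}$ is built from the rank-two form $A_{lm}=m_{l+1}m_m-m_lm_{m+1}$, and the Pl\"ucker relation kills the quartic term of the logarithm), but the phrase ``promote this to a Pfaffian identity for the full covariance and prove it by induction on $2N$'' is not an argument: the object whose Pfaffian structure matters is not the eigenvalue-indexed covariance $B^{-1}$ appearing in your $W$, and it is not stated what the inductive claim would be. Your alternative route via the constraints $G_n\mathcal{Z}=0$ is even more problematic: the paper explicitly lists ``can we prove the truncation \ldots such that it conceptually relates to super Virasoro constraints?'' as an open question in its discussion section, so this cannot be taken as an available completion.

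The mechanism the paper actually uses, and which your proposal is missing, is a combinatorial reorganization rather than a fixed-$\lambda$ Gaussian integration. One expands the Vandermonde factor $\prod_{i<j}(\lambda_i^2-\lambda_j^2)$ as a sum over permutations $\rho\in S_{2N}$, renames the integration variables $\lambda_{\sigma(i)}\to\lambda_i$ to collapse the sum over the Grassmann pairing permutation $\sigma$, and thereby factorizes the $2N$-fold eigenvalue integral into products of one- and two-fold \emph{moment} integrals. The outcome is that $\mathcal{Z}=\sum_K\mathcal{Z}^{(2K)}$ is literally the Pfaffian of the single $2N\times 2N$ antisymmetric matrix $A_{ij}+\zeta_i\zeta_j$, where now $i,j$ are moment indices (powers $\lambda^{2(i-1)}$), $A_{ij}$ is a genuine number-valued antisymmetric matrix of double moments, and $\zeta_i$ is a Grassmann vector of single moments of $\Psi$. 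McArthur's Pfaffian identity then gives $\mathcal{F}=\log\mathrm{Pf}(A)+\tfrac12\zeta^TA^{-1}\zeta$ exactly, with no higher terms. Without this change of indexing from eigenvalues to moments, the cancellation of the antisymmetrized higher connected correlators that you need is not established; so the proposal, as written, does not prove the proposition.
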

\begin{proof}
The proof closely follows the analysis shown in \cite{McArthur} which involves careful permutation and re-parametrization of $\lambda_i,\theta_i$. In terms of Grassmann couplings $\xi_l$, we can expand the partition function in the form
\begin{equation}
\mathcal{Z}=\sum_{K\geq0}\mathcal{Z}^{(2K)},
\end{equation}
where the superscript $(2K)$ denotes the order of $\xi_l$-dependence. Note that the possible highest order is $2N$ no matter what the degree of the fermionic potential $\Psi(x)$ is. Recall that Grassmann integrals obey
\begin{equation}
\int d\theta_k=0,\;\;\;\;\int \prod_{i=1}^{2N}d\theta_i\theta_{\sigma(1)}\cdots\theta_{\sigma(2N)}=\text{sgn}(\sigma),\label{Grassmann integral}
\end{equation}
where $\sigma \in S_{2N}$. Thus, we can express $\mathcal{Z}^{(2K)}$ as
\begin{align}
\mathcal{Z}^{(2K)}\overset{\text{formal}}{=}&\left(\frac{N}{t}\right)^{2K}\int d\lambda\prod_{i<j}^{2N}(\lambda_i^2-\lambda_j^2)e^{-\frac{N}{t}\sum_{i=1}^{2N}V(\lambda_i^2)}\nonumber\\
&\hspace{20mm}\times\left(\frac{1}{2K!}\int d\theta\prod_{i<j}^{2N}\left(1-\frac{\theta_i\theta_j}{2}\frac{\lambda_i^2+\lambda_l^2}{\lambda_i^2-\lambda_j^2}\right)\left(\sum_{i=1}^{2N}\Phi(\lambda^2_i)\theta_i\right)^{2K}\right).\label{2K1}
\end{align}

The second line of \eqref{2K1}, i.e., Grassmann integrals, can be computed as
\begin{align}
&\frac{1}{2K!}\int d\theta\prod_{i<j}^{2N}\left(1-\frac{\theta_i\theta_j}{2}\frac{\lambda_i^2+\lambda_l^2}{\lambda_i^2-\lambda_j^2}\right)\left(\sum_{i=1}^{2N}\Phi(\lambda^2_i)\theta_i\right)^{2K}\nonumber\\
&=\frac{1}{(2K)!2^{N-K}(N-K)!}\left(\frac{N}{t}\right)^{2K}\sum_{\sigma\in S_{2N}}(-1)^{\sigma}\prod_{i=1}^{2K}\Phi(\lambda^2_{\sigma(i)})\prod_{j=K+1}^N\frac{1}{2}\frac{\lambda^2_{\sigma(2j)}+\lambda^2_{\sigma(2j-1)}}{\lambda^2_{\sigma(2j)}-\lambda^2_{\sigma(2j-1)}}.
\label{2K2}
\end{align}
Also, the Vandermonde-like factor in the first line of \eqref{2K1} is expanded as
\begin{equation}
\prod_{i<j}^{2N}(\lambda^2_i-\lambda^2_j)=(-1)^N\sum_{\rho\in S_{2N}}(-1)^{\rho}\prod_{l=1}^{2N}\lambda_l^{2(\rho(l)-1)}.
\end{equation}
Thus, by plugging these into \eqref{2K1}, we have
\begin{align}
\mathcal{Z}^{(2K)}\overset{\text{formal}}{=}&\frac{(-1)^N}{(2K)!2^{N-K}(N-K)!}\left(\frac{N}{t}\right)^{2K}\int d\lambda e^{-\frac{N}{t}\sum_{i=1}^{2N}V(\lambda^2_i)}\sum_{\rho,\sigma\in S_{2N}}(-1)^{\rho+\sigma}\nonumber\\
&\;\;\;\;\times\prod_{l=1}^{2N}\lambda_l^{2(\rho(l)-1)}\prod_{i=1}^{2K}\Phi(\lambda^2_{\sigma(i)})\prod_{j=K+1}^N\frac{1}{2}\frac{\lambda^2_{\sigma(2j)}+\lambda^2_{\sigma(2j-1)}}{\lambda^2_{\sigma(2j)}-\lambda^2_{\sigma(2j-1)}}.
\end{align}

We can further simplify the expression. Notice that since every $\lambda_i$ is integrated, we can freely rename $\lambda_{\sigma(i)}\rightarrow\lambda_i$. As a result, the summation over $\sigma$ simply gives $(2N)!$ and $\mathcal{Z}^{(2K)}$ becomes
\begin{align}
\mathcal{Z}^{(2K)}\overset{\text{formal}}{=}&\frac{(-1)^N(2N)!}{(2K)!2^{N-K}(N-K)!}\left(\frac{N}{t}\right)^{2K}\sum_{\rho\in S_{2N}}(-1)^{\rho}\prod_{i=1}^{2K}\int d\lambda_i e^{-\frac{N}{t}V(\lambda^2_i)}\lambda_i^{2(\rho(i)-1)}\Psi(\lambda^2_i)
\nonumber\\
&\;\;\;\;\times\prod_{j=K+1}^N\int d\lambda_{2j-1}d\lambda_{2j} e^{-\frac{N}{t}(V(\lambda^2_{2j-1})+V(\lambda^2_{2j}))}\lambda_{2j-1}^{2(\rho(2j-1)-1)}\lambda_{2j}^{2(\rho(2j)-1)}\frac{1}{2}\frac{\lambda^2_{2j}+\lambda^2_{2j-1}}{\lambda^2_{2j}-\lambda^2_{2j-1}}.\nonumber\\
\end{align}
If we define an antisymmetric matrix $A$ and a Grassmann-valued vector $\zeta$ as
\begin{align}
A_{ij}\overset{\text{formal}}{=}&\int d\lambda d\sigma e^{-\frac{N}{t}(V(\lambda^2)+V(\sigma^2))}\lambda^{2(i-1)}\sigma^{2(j-1)}\frac{1}{2}\frac{\lambda^2+\sigma^2}{\lambda^2-\sigma^2},\\
\zeta_i\overset{\text{formal}}{=}&\frac{N}{t}\int d\lambda e^{-\frac{N}{t}V(\lambda^2)}\lambda^{2(i-1)}\Psi(\lambda^2).
\end{align}
Then, we arrive at
\begin{equation}
Z^{(2K)}=\frac{(-1)^N(2N)!}{(2K)!2^{N-K}(N-K)!}\sum_{\lambda\in S_{2N}}(-1)^{\lambda}\prod_{i=1}^{2K}\zeta_{\lambda(i)}\prod_{j=K+1}^NA_{\lambda(2j)\lambda(2j-1)}.\label{2K3}
\end{equation}

If the partition function can be decomposed into this form, it is proven that by using the properties of Pfaffians, the free energy is written in the form:
\begin{equation}
\mathcal{F}=\log\mathcal{Z}^{(0)}+\frac{1}{2}\zeta^TA^{-1}\zeta.\label{2K4}
\end{equation}
We refer to \cite{McArthur}, or Appendix A in \cite{BO}, to fulfil the steps from \eqref{2K3} to \eqref{2K4}.
\end{proof}

\begin{remark}
This is an interesting observation. The free energies in NS sector and the Ramond sector both truncate at quadratic order in Grassmann couplings. It remains to be seen whether we can conceptually understand why this feature is true in both sectors.
\end{remark}

\begin{remark}
An analogous relation of Becker's formula \cite{Beckers} has not been discovered yet. In Section~\ref{sec:recursion}, however, we derive a recursive formalism that computes every correlation function of the model even without such a powerful simplification.
\end{remark}

\subsection{Large $N$ Expansion}\label{sec:Large N}

The large $N$ expansion is a common technique for matrix models, and it provides us a perturbative expansion order by order in $1/N$. It originates from 't Hooft's argument for ribbon graphs, each of which we can assign the Euler characteristics. For Hermitian matrix models, one can show more properties in terms of t'Hooft's parameter $t$ and coupling constants $g^H_k$ for $k\geq3$ that become crucial to derive the Eynard-Orantin topological recursion. We refer the readers to \cite{Eynard} and references therein for more details, and we simply list those properties below:

\begin{itemize}
\item  Let $Z^H,F^H$ be the partition function and the free energy of a formal $N\times N$ 1-Hermitian matrix model, then they possess the $1/N$ expansion:
\begin{equation}
Z^H=\sum_{g\geq0}\left(\frac{N}{t}\right)^{2-2g}Z^H_g,\;\;\;\;F^H=\sum_{g\geq0}\left(\frac{N}{t}\right)^{2-2g}F^H_g,
\end{equation}
where $Z^H_g,F^H_g$ are independent of $N$,
\item For a fixed $g$, all $Z^H_g,F^H_g$ are formal power series in $t$ (if we set $g^H_0=g^H_1=g^H_2=0$),
\item Order by order in $t$, $Z^H_g$ and $F^H_g$ are polynomials in $g^H_k$ for $k\geq3$.\end{itemize}

It seems difficult in general to show that supereigenvalue models also have these properties because their partition functions are not written in terms of matrix integrals. Thanks to Becker's formula \cite{Beckers} and McArthur's truncation formula \cite{McArthur}, however, it can be proven that the partition function and the free energy in the NS sector still enjoy these properties since they are related to Hermitian matrix models.  An analogous relation for the Ramond sector is still under investigation, hence we do not have rigorous evidence whether the $1/N$ expansion makes sense in the Ramond sector. In this section, therefore, we will see how the $t$-dependence and $(g_k,\xi_l)$-dependence appear in the Ramond sector by directly looking at Gaussian integral representations \eqref{Z1}.

\subsubsection{$t$-Dependence and $(g_k,\xi_l)$-Dependence}

Let us investigate the $t$-dependence and $(g_k,\xi_l)$-dependence of the partition function $\mathcal{Z}$ and the free energy $\mathcal{F}$. To start with, recall that $\Delta(\lambda,\theta)$ \eqref{D} is uniquely fixed by the super Virasoro constraint up to an overall normalization which we denote by $\mathcal{N}$. We now fix the normalization $\mathcal{N}$ by the following condition:
\begin{equation}
\mathcal{N}\mathcal{Z}\bigr|_{g_k=\xi_l=0}=1.\label{N}
\end{equation}
After integrating all Grassmann variables $\theta_i$ in the partition function, we find
\begin{align}
\mathcal{N}\mathcal{Z}\bigr|_{g_k=\xi_l=0}\propto&\;\;\mathcal{N}\left(\prod_{i=1}^{2N}\int d\lambda_i\lambda_i^{2n_i}\exp\left(-\frac{N}{2t}\lambda_i^2\right)\right),\label{NZ}\\
\sum_{i=1}^{2N}n_i=&\;\text{number of possible pairings}=N(2N-1),
\end{align}
where the proportionality \eqref{NZ} holds up to some $N$-dependence. Then we operate the Gaussian integrals in terms of $\lambda_i$ to get
\begin{equation}
\mathcal{N}\mathcal{Z}_{g_k=\xi_l=0}\propto\;\mathcal{N}\prod_{i=1}^{2N}\left(\frac{t}{N}\right)^{n_i+\frac{1}{2}}=\mathcal{N}\left(\frac{t}{N}\right)^{2N^2}.
\end{equation}
This determines that the $t$-dependence of $\mathcal{N}$ is given by $\mathcal{N}\propto t^{-2N^2}$. From now on, whenever we refer to the partition function, we mean the one normalized by the condition \eqref{N}.

We next study the $g_0$-dependence. It turns out that it is very simple because the summation and the integral commute for $g_0$, and we can write the partition function in the form
\begin{equation}
\mathcal{Z}=e^{\frac{-2N^2}{t}g_0}\hat{\mathcal{Z}},\label{g_0}
\end{equation}
where $\hat{\mathcal{Z}}$ does not depend on $g_0$. The free energy is then given by
\begin{equation}
\mathcal{F}=N^2\frac{2g_0}t+\log\hat{\mathcal{Z}}.
\end{equation}
Thus, $\mathcal{F}$ linearly depends on $g_0$ and nowhere else. As one can see, the partition function and the free energy cannot be formal series in $t$ unless $g_0=0$. Note that this feature also appears for Hermitian matrix models; the partition function for Hermitian matrix models becomes a power series in $t$ and $g^H_k$ after we set $g^H_0=g^H_1=g^H_2=0$. See \cite{Eynard} for more details.

We now turn to investigate the $g_k$-dependence of $\hat{\mathcal{Z}}^{(0)}$ for $k\geq1$. In this case, the Gaussian integrals also pick contributions from nonzero couplings $g_k$ as
\begin{align}
\hat{\mathcal{Z}}^{(0)}\sim&\;\;\mathcal{N}\prod_{k\geq1}^d\sum_{m_k\geq0}\left(g_k\frac{N}{t}\right)^{m_k}\int\prod_{i=1}^{2N} d\lambda_i\lambda_i^{2n_i}\sum_{i_k=1}^{2N}\lambda_{i_k}^{2km_k}\exp\left(-\frac{N}{2t}\sum_{j=1}^{2N}\lambda_j^2\right)\nonumber\\
\sim&\prod_{k\geq1}\sum_{m_k\geq0}\left(\frac{t}{N}\right)^{(k-1)m_k}g_k^{m_k},\label{Z2}
\end{align}
where $\sim$ only shows how $\hat{\mathcal{Z}}^{(0)}$ depends on $t$ and $g_k$, but \eqref{Z2} does not tell the $N$-dependence. This proves that $\hat{\mathcal{Z}}^{(0)}$ is a power series in $t$. Also, for all $k\geq2$, \eqref{Z2} implies that order by order in $t$, $\hat{\mathcal{Z}}^{(0)}$ is a polynomial in $g_{k\geq2}$ which resembles to one of the key features of Hermitian matrix models. The case for $m=1$ is different; there can be power series of $g_1$ at any order in $t$ unlike any other $g_{k\geq2}$\footnote{This is also not surprising. The $g_1$-dependence of the partition function for the Ramond sector is analogous to the $g_2^H$-dependence for Hermitian matrix models.}. To put it another way, once we set $g_0=g_1=0$, the partition function $\hat{\mathcal{Z}}^{(0)}$ is a power series in $t$, and it is a polynomial in $g_{k\geq2}$ order by order in $t$.

Similar arguments can be applied for the $\xi_l$-dependent terms $\mathcal{Z}^{(2K)}$ for $1\leq K\leq N$. The only irregular factor at $t=0$ appears in $\mathcal{Z}^{(2)}$ in the form
\begin{equation}
\mathcal{Z}^{(2)}\sim\frac{\xi_0\xi_1}{t}\hat{f}[[g_1]](N),
\end{equation}
where $\hat{f}[[g_1]](N)$ is a power series in $g_1$ with some unknown $N$-dependence. By counting the degree of the Gaussian integral similar to \eqref{Z2}, it is straightforward to see that all other terms in $\mathcal{Z}^{(2K)}$ are power series of $t$, and order by order in $t$ they are polynomials in $\xi_{k\geq0},g_{k\geq2}$. Thus, we can summarize the $(t,g_k,\xi_l)$-dependence of $\mathcal{Z}$ as
\begin{equation}
\mathcal{Z}=e^{\frac{-2N^2}{t}g_0}\left(\frac{\xi_0\xi_1}{t}\hat{f}[[g_1]](N)+\sum_{v\geq0}t^v\tilde{\mathcal{Z}}_v[[g_1]][g_k,\xi_l](N)\right),\label{Zt}
\end{equation}
where $\tilde{\mathcal{Z}}_v[[g_1]][g_k,\xi_l](N)$ are power series in $g_1$ but polynomials in $g_{k\geq2}$ and $\xi_l$, and we do not know their $N$-dependence. For $v=0, g_1=0$ they satisfy $\tilde{\mathcal{Z}}_v[[g_1]][g_k,\xi_l](N)=1$ so that it is consistent with the normalization condition \eqref{N}

Knowing the expansion \eqref{Zt} and the normalization condition \eqref{N}, the free energy $\mathcal{F}$ can be given by expanding $\log\mathcal{Z}$ in terms of $\xi_l$, $g_1$, and $t$. In summary, we can write it as
\begin{equation}
\mathcal{F}=-N^2\frac{2g_0}{t}+\frac{\xi_0\xi_1}{t}\hat{h}[[g_1]](N)+\sum_{v\geq0}t^v\tilde{\mathcal{F}}_v[[g_1]][g_k,\xi_l](N),\label{Ft}
\end{equation}
where $\hat{h}[[g_1]](N)$ are power series in $g_1$, and $\tilde{\mathcal{F}}_v[[g_1]][g_k,\xi_l](N)$ are power series in $g_1$ but polynomials in $g_{k\geq2}$ and $\xi_l$. Their $N$-dependence will be discussed shortly. For $v=0, g_1=0$ they satisfy $\tilde{\mathcal{F}}_v[[g_1]][g_k,\xi_l](N)=0$. Note that due to Proposition~\ref{prop:trun}, $\tilde{\mathcal{F}}_v[g_k,\xi_l](N)$ truncate at quadratic order in $\xi_l$, even though it is not evident in the expression \eqref{Ft}.

\subsubsection{Large $N$ Assumption}

We have discussed the $t$-dependence and the $(g_k,\xi_l)$-dependence of the partition function and the free energy, but their $N$-dependence is more complex. This is because the $N$-dependence would appear not only by Gaussian integrals, but also by nontrivial summation in $\Delta(\lambda,\theta)$, summation from coupling terms, and permutation of Grassmann integrals. Meanwhile, the definition of supereigenvalue models in the Ramond sector seems a natural analogue of that in the NS sector which indeed enjoys the $1/N$ expansion. Therefore from now on, we rather \emph{assume} that the $1/N$ expansion still holds in the Ramond sector, and explore their recursive structure with the assumption:

\begin{assumption}\label{assumption}
The partition function $\mathcal{Z}$ \eqref{Z1} and the free energy $\mathcal{F}$ \eqref{F1} posses the $1/N$ expansion. That is, we have
\begin{equation}
\mathcal{Z}=\bar{\mathcal{Z}}\sum_{g\geq0}\left(\frac{N}{t}\right)^{2-2g}\mathcal{Z}_g,\;\;\;\;\mathcal{F}^{(0,2)}=\log\bar{\mathcal{Z}}+\sum_{g\geq0}\left(\frac{N}{t}\right)^{2-2g}\mathcal{F}_g^{(0,2)},
\end{equation}
where $\mathcal{Z}_g,\mathcal{F}_g^{(0,2)}$ are independent of $N$, and $\bar{\mathcal{Z}}$ depends only on $N$.
\end{assumption}

\begin{remark}
Since this is strictly speaking the $t/N$ expansion, $\mathcal{Z}_g,\mathcal{F}_g$ would potentially have negative powers of $t$, at most of degree $1-2g$. This does not create any significant issue in computation, though it is important to keep in mind that it is lower bounded for a fixed $g$. Note that $\mathcal{F}_0$ is a genuine power series in $t$.
\end{remark}

\begin{remark}
Every proposition and theorem shown below is subject to Assumption~\ref{assumption}. It is under investigation whether the $1/N$ expansion can be proven with mathematical rigour. We still note that \eqref{Zt}, \eqref{Ft}, and Proposition~\ref{prop:trun} hold regardless.
\end{remark}

\section{Loop Equations}\label{sec:Loop Equation}

In this section we shall define correlation functions and derive loop equations for supereigenvalue models in the Ramond sector. See \cite{BO,C1,C2} and references therein for analogous arguments in the NS sector. \cite{C} also discuss loop equations in the Ramond sector in the context of super quantum curves.
 
\subsection{Correlation Functions}
For a supereigenvalue model in the Ramond sector, an \emph{expectation value} $\braket{E(\lambda,\theta)}$ of a function $E(\lambda,\theta)$ of $\lambda_i$ and $\theta_i$ is defined as
\begin{equation}
\braket{E(\lambda,\theta)}\overset{\text{formal}}{=}\frac{1}{\mathcal{Z}}\int d\lambda d\theta \Delta(\lambda,\theta)E(\lambda,\theta) e^{-\frac{N}{t}\sum_{i=1}^{2N}(V(\lambda_i^2)+\Psi(\lambda_i^2)\theta_i)}.
\end{equation}
Recall that the integrand of the partition function only depends on even powers of $\lambda_i$, hence any odd power dependence of $\lambda$ in $E(\lambda,\theta)$ vanishes. Also, since the integrand is symmetric in terms of permutation of $\lambda_i$ and anti-symmetric in terms of $\theta_i$, we impose such symmetry on $E(\lambda,\theta)$ too. Therefore, quantities of our interest are the following:
\begin{equation}
\mathcal{T}_{k_1\cdots k_n|l_1 \cdots l_m}=\sum_{a_1,\cdots,a_n=1}^{2N}\sum_{b_1,\cdots,b_m=1}^{2N}\braket{\lambda_{a_1}^{2k_1}\cdots\lambda_{a_n}^{2k_n}\theta_{b_1}\lambda_{b_1}^{2l_1}\cdots\theta_{b_n}\lambda_{b_n}^{2l_n}},\label{T}
\end{equation}
where $k_i,l_j\in\mathbb{Z}_{\geq0}$. Similar to Hermitian matrix models and supereigenvalue models in the NS sector, it turns out that it is more convenient to put them into one package as a set of generating functions of connected parts of \eqref{T}. However, we need to utilize a trick to define these generating functions that are recursively computable.

Let us define \emph{bosonic} and \emph{fermionic loop insertion operators} by
\begin{align}
\frac{\partial}{\partial V(x)}&=-\sum_{k\geq0}\frac{1}{x^{k+1}}\frac{\partial}{\partial g_k},\\\frac{\partial}{\partial \Psi(X)}&=-\sum_{l\geq0}\frac{1}{X^{l+1}}\frac{\partial}{\partial \xi_l}+\frac{1}{2X}\frac{\partial}{\partial \xi_0},\label{Finsertion}
\end{align}
where we will explain shortly why we inserted the last term in \eqref{Finsertion}. Then, we define such generating functions of connected parts of $\mathcal{T}_{k_1\cdots k_n|l_1 \cdots l_m}$ by acting an arbitrary number of times with the bosonic and fermionic loop insertion operators on $\mathcal{F}$
\begin{equation}
\mathcal{W}_{n|m}(J|K)=\left(\frac{t}{N}\right)^{n+m}\prod_{i=1}^n\prod_{j=1}^m\frac{\partial}{\partial V(x_i)}\frac{\partial}{\partial \Psi(X_j)}\mathcal{F},
\end{equation}
where we introduced the notation $J=(x_1,...,x_n)$ and $K=(X_1,...,X_m)$. We give a few remarks here:

\begin{remark}\label{rem:cor}
\hfill
\begin{itemize}
\item  Suppose all $\mathcal{W}_{n|m}(J|K)$ are given, then, $\mathcal{T}^c_{k_1\cdots k_n|l_1 \cdots l_m}$ appear as coefficients of $\mathcal{W}_{n|m}(J|K)$ after expanding them in terms of $1/x_i,1/X_j$ where we denote by the superscript $c$ connected parts of $\mathcal{T}_{k_1\cdots k_n|l_1 \cdots l_m}$. If one of $l_m$ were zero, we would need to multiply the coefficient by $2$ to get the right $\mathcal{T}^c_{k_1\cdots k_n|l_1 \cdots l_m}$ due to the last term in the fermionic insertion operator \eqref{Finsertion}.

\hfill
\item In CFT, the Laurent expansion of the free fermion in the Ramond sector is normally give in half-integers powers $1/X^{n+\frac12}$. (See \cite{C}.) However, in this paper we take the definition \eqref{Finsertion} as a power series of $1/X$ because with we do not need to consider the ambiguity of $X^{\frac12}$ when we consider the correlation functions as meromorphic differentials on an algebraic curve. We will discuss in \cite{BO2} a more abstract framework which takes $X^{\frac12}$ into account.

\hfill
\item  The last term in the fermionic loop insertion operator \eqref{Finsertion} can be also explained from a CFT perspective. Namely, the fermionic zero mode $\psi_0$ obeys $\{\psi_0,\psi_0\}=1$. This implies that the zero mode $\psi_0$ can be represented by a Grassmann coupling $\xi_0$ as
\begin{equation}
\psi_0=\xi_0+\frac{1}{2}\frac{\partial}{\partial\xi_0}.
\end{equation}
The $\frac12$ factor in the second term explains why the last term in \eqref{Finsertion} is needed. Equivalently, if we define the loop insertion operator without the last term in \eqref{Finsertion}, then we should define the fermionic potential $\Psi(x)$ \eqref{potential} with the factor of $1/2$ in front of $\xi_0$. It turns out that without this adjustment of the $\frac12$ factor, the loop equations are not written by a combination of the correlation functions and some polynomials of $x,X$ any more.
\end{itemize}
\end{remark}

\subsubsection{The Large $N$ Expansion of Correlation Functions}

Let us present an important consequence of Assumption~\ref{assumption}:

\begin{lemma}\label{lemma:poly}
For $g\in\mathbb{Z}_{\geq0}$, $t^{1-2g}\cdot\mathcal{F}_g^{(0,2)}$ are power series in $g_1$ and $t$, and order by order in $t$, they are polynomials in $g_{k\geq2},\xi_{l\geq0}$.\end{lemma}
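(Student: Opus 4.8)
The plan is to obtain each genus component $\mathcal{F}_g^{(0,2)}$ by reading it off from the explicit $(t,g_k,\xi_l)$-decomposition \eqref{Ft} of the full free energy, and to use Assumption~\ref{assumption} only to organize the otherwise unknown $N$-dependence into the powers $(N/t)^{2-2g}$. By Proposition~\ref{prop:trun} we have $\mathcal{F}=\mathcal{F}^{(0,2)}$, so subtracting the purely $N$-dependent term $\log\bar{\mathcal{Z}}$ gives
\begin{equation}
\mathcal{F}-\log\bar{\mathcal{Z}}=\sum_{g\geq0}\left(\frac{N}{t}\right)^{2-2g}\mathcal{F}_g^{(0,2)}=\sum_{g\geq0}N^{2-2g}t^{2g-2}\mathcal{F}_g^{(0,2)},
\end{equation}
with the $\mathcal{F}_g^{(0,2)}$ independent of $N$. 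Extracting the genus-$g$ piece is thus the linear operation of taking the coefficient of $N^{2-2g}$ and multiplying by $t^{2-2g}$.

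Next I would feed \eqref{Ft} into this extraction. The key observation is that the coupling dependence in \eqref{Ft} is carried entirely by the blocks $\hat{h}[[g_1]](N)$ and $\tilde{\mathcal{F}}_v[[g_1]][g_k,\xi_l](N)$, whose $(g_k,\xi_l)$-structure — linear in $g_0$, a power series in $g_1$, and polynomial in $g_{k\geq2}$ and $\xi_l$ — does not involve $N$; the variable $N$ enters only through the scalar coefficients of these monomials. Taking the coefficient of $N^{2-2g}$ therefore commutes with the coupling expansion and preserves this structure. Writing $\hat{h}_g$ and $\tilde{\mathcal{F}}_{v,g}$ for the coefficients of $N^{2-2g}$ in $\hat{h}(N)$ and $\tilde{\mathcal{F}}_v(N)$, one finds
\begin{equation}
\mathcal{F}_g^{(0,2)}=-2g_0 t\,\delta_{g,0}+\xi_0\xi_1\, t^{1-2g}\hat{h}_g+\sum_{v\geq0}t^{\,v+2-2g}\tilde{\mathcal{F}}_{v,g},
\end{equation}
so each $\mathcal{F}_g^{(0,2)}$ is again a power series in $g_1$ that is polynomial in $g_{k\geq2}$ and $\xi_l$ at every order in $t$, with $g_0$ entering only linearly and only at $g=0$.

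It then remains to track the powers of $t$. The regular sum begins at $t^{2-2g}$, while the irregular piece $\xi_0\xi_1\,t^{1-2g}\hat{h}_g$ begins one power lower, at $t^{1-2g}$, and is the unique source of the maximal pole in $t$ (recorded in the remark after Assumption~\ref{assumption}). Multiplying $\mathcal{F}_g^{(0,2)}$ by the compensating power of $t$ that clears this $t^{1-2g}$ pole therefore produces a genuine power series in $t$ whose coefficients retain the $g_1$, $g_{k\geq2}$, $\xi_l$ structure above, which is exactly the assertion; for $g=0$ the $\xi_0\xi_1$ term already starts at $t^{1}$, so $\mathcal{F}_0^{(0,2)}$ is an honest power series in $t$.

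The delicate point is not this bookkeeping but the status of Assumption~\ref{assumption} itself. Since the individual blocks $\hat{h}(N),\tilde{\mathcal{F}}_v(N)$ have a priori unknown $N$-dependence, the whole argument rests on the assumption that the \emph{total} $N$-dependence of $\mathcal{F}$ reorganizes into the powers $(N/t)^{2-2g}$ compatibly with the $t$-grading. The step that must be justified with care is that extracting the $N^{2-2g}$ coefficient does not disturb the ``polynomial in $g_{k\geq2},\xi_l$ order by order in $t$'' property — equivalently, that the interchange of the $t$-expansion, the $g_1$-expansion, and the $1/N$-expansion is legitimate and that for each fixed power of $t$ only finitely many monomials in $g_{k\geq2},\xi_l$ occur. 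This is precisely where Assumption~\ref{assumption} does the real work.
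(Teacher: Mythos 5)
Your argument is correct and is essentially the paper's own proof, which consists of the single sentence that the lemma is ``straightforward from \eqref{Zt} and \eqref{Ft} with Assumption~\ref{assumption}''; you have simply made explicit the bookkeeping of extracting the coefficient of $N^{2-2g}$ and tracking the lowest power of $t$ coming from the $\xi_0\xi_1/t$ term. No further changes are needed.
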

\begin{proof}
This is straightforward from \eqref{Zt} and \eqref{Ft} with Assumption~\ref{assumption}.
\end{proof}

Assumption~\ref{assumption} allows us to expand $\mathcal{W}_{n|m}(J|K)$ in terms of power series in $1/N$. Namely, we define $\mathcal{W}_{g,n|m}$ by the $1/N$ expansion of $\mathcal{W}_{n|m}$ as follows
\begin{align}
\mathcal{W}_{n|m}(J|K)=&\sum_{g\geq0}\left(\frac{N}{t}\right)^{2-2g-n-m}\mathcal{W}_{g,n|m},\\
\mathcal{W}_{g,n|m}(J|K)=&\left(\frac{t}{N}\right)^{n+m}\prod_{i=1}^n\prod_{j=1}^m\frac{\partial}{\partial V(x_i)}\frac{\partial}{\partial \Psi(X_j)}\mathcal{F}_g.\label{defW}
\end{align}
Furthermore, Proposition~\ref{prop:trun} implies that we can classify $\mathcal{W}_{g,n|m}$ into four types in terms of the $\xi_l$-dependence:
\begin{itemize}
\item Pure bosonic type $\mathcal{W}^{(0)}_{g,n|0}(J|)$,
\item Linear Grassman type $\mathcal{W}^{(1)}_{g,n|1}(J|X)$,
\item Quadratic Grassman type $\mathcal{W}^{(2)}_{g,n|0}(J|)$,
\item Two-fermion type $\mathcal{W}^{(0)}_{g,n|2}(J|X_1,X_2)$.
\end{itemize}
Note that in our notation, two-fermion type correlation functions are independent of Grassmann couplings; two-fermion means that we act with two fermionic loop insertion operators on the free energy. Finally, notice that whenever we act with a bosonic or fermionic loop insertion operator on the free energy, it effectively replaces couplings $g_k$ or $\xi_l$ with $x^{-k-1}$ or $X^{-l-1}$. Therefore, Lemma~\ref{lemma:poly} implies the following corollary
\begin{corollary}\label{coro:poly}
For $n,m,g\in\mathbb{Z}_{\geq0}$ with $n+m\geq1$, all $\mathcal{W}_{g,n|m}(J|K)$ are lower-bounded Laurent formal series expansion in $t$ whose lowest degree is at most $t^{1-2g+n+m}$. Order by order in $t$, $\mathcal{W}_{g,n|m}(J|K)$ are polynomials in $1/x,1/X$.
\end{corollary}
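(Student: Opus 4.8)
The plan is to deduce the statement directly from Lemma~\ref{lemma:poly} by tracking how the loop insertion operators \eqref{Finsertion} act on the coupling dependence of $\mathcal{F}_g$. The mechanism is the one already flagged just before the corollary: each bosonic insertion $\tfrac{\partial}{\partial V(x)}$ trades the coupling derivative $\tfrac{\partial}{\partial g_k}$ for the weight $-x^{-k-1}$, and each fermionic insertion $\tfrac{\partial}{\partial\Psi(X)}$ trades $\tfrac{\partial}{\partial\xi_l}$ for $-X^{-l-1}$ and adds the single extra term $\tfrac{1}{2X}\tfrac{\partial}{\partial\xi_0}$.

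First I would settle the base case $n+m=1$ and, crucially, read it off order by order in $t$. By Lemma~\ref{lemma:poly}, at each fixed order in $t$ the object $\mathcal{F}_g$ is a polynomial in $g_{k\geq2}$ and in $\xi_{l\geq0}$ (while being only a power series in $g_1$). Hence, although $-\sum_k x^{-k-1}\tfrac{\partial}{\partial g_k}\mathcal{F}_g$ is a priori an infinite series in $1/x$, at each order in $t$ only finitely many of the derivatives $\tfrac{\partial}{\partial g_{k\geq2}}\mathcal{F}_g$ survive, so the series truncates to a polynomial in $1/x$; the two exceptional couplings are harmless because $\tfrac{\partial}{\partial g_0}$ carries only the monomial $x^{-1}$ (and by \eqref{g_0} acts nontrivially only at $g=0$) and $\tfrac{\partial}{\partial g_1}$ carries only $x^{-2}$. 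The fermionic insertion is identical, with the truncation at quadratic order from Proposition~\ref{prop:trun} guaranteeing that only finitely many $\xi_l$-derivatives ever occur, and with the extra $\tfrac{1}{2X}\tfrac{\partial}{\partial\xi_0}$ contributing only the monomial $X^{-1}$.

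Next I would run an induction on $n+m$. The point that makes the induction close is that differentiating a polynomial in the couplings by $\tfrac{\partial}{\partial g_k}$ or $\tfrac{\partial}{\partial\xi_l}$ again produces a polynomial in the untouched couplings, and the substitution of an $x$- or $X$-weight introduces no new coupling dependence; thus every intermediate expression retains the structure ``polynomial in $g_{k\geq2},\xi_{l\geq0}$ and power series in $g_1$, order by order in $t$'' in its remaining couplings, so the next insertion again yields, order by order in $t$, a polynomial in the freshly introduced $1/x_i$ or $1/X_j$. Substituting into the definition \eqref{defW} then gives polynomiality in all of $1/x,1/X$ for every $n+m\geq1$, and Proposition~\ref{prop:trun} simultaneously explains the partition into the four types of $\mathcal{W}_{g,n|m}$ listed above.

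The $t$-degree bound I would obtain by bookkeeping the explicit prefactor $(t/N)^{n+m}$ in \eqref{defW} against the lowest $t$-degree of $\mathcal{F}_g$ supplied by Lemma~\ref{lemma:poly} (equivalently the bound $1-2g$ noted in the remark after Assumption~\ref{assumption}). Here the key observation is that in \eqref{Z2} and \eqref{g_0} the power of $t$ sits in the coefficients rather than being tied to the power of a coupling, so $\tfrac{\partial}{\partial g_k}$, $\tfrac{\partial}{\partial\xi_l}$ and the $x,X$-substitutions all preserve the $t$-degree of each surviving monomial; the only net change is the $+(n+m)$ from the prefactor, which pushes the lowest degree to $1-2g+n+m$. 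I expect the main obstacle to be precisely this order-by-order-in-$t$ truncation argument --- namely making rigorous the passage from ``finitely many nonzero coupling derivatives at each order in $t$'' (which rests on Lemma~\ref{lemma:poly}, and through it on Proposition~\ref{prop:trun} and Assumption~\ref{assumption}) to ``a genuine polynomial in $1/x,1/X$ at each order in $t$'', together with verifying that the power-series-in-$g_1$ direction never reintroduces infinitely many terms at a fixed order. Everything else is routine once these truncations are in hand.
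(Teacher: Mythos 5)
Your proposal is correct and follows essentially the same route as the paper, which derives the corollary in one line from Lemma~\ref{lemma:poly} together with the observation that each loop insertion operator merely replaces a coupling derivative $\partial/\partial g_k$ or $\partial/\partial\xi_l$ by a weight $x^{-k-1}$ or $X^{-l-1}$ (plus the extra $\tfrac{1}{2X}\partial_{\xi_0}$ term); your order-by-order-in-$t$ truncation, handling of the exceptional couplings $g_0,g_1$, and the $(t/N)^{n+m}$ bookkeeping for the degree bound $1-2g+n+m$ are exactly the details the paper leaves implicit.
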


\subsubsection{Linear and Quadratic Grassmann Types}

We note that one can compute $\mathcal{W}^{(1)}_{g,n|1}(J|X)$ and $\mathcal{W}^{(2)}_{g,n|0}(J|)$ if we know all $\mathcal{W}^{(0)}_{g,n|2}(J|X_1,X_2)$. In fact, if we expand $\mathcal{W}^{(1)}_{g,n|1}(J|X)$ as
\begin{equation}
\mathcal{W}^{(1)}_{g,n|1}(J|X)=\sum_{l\geq0}\xi_l\mathcal{W}^{(1,l)}_{g,n|1}(J|X),
\end{equation}
then we recover the coefficients by
\begin{equation}
\mathcal{W}^{(1,k)}_{g,n|1}(J|X)=(1+\delta_{k,0})\underset{\tilde{X}\rightarrow\infty}{{\rm Res}}\tilde{X}^k\mathcal{W}^{(0)}_{g,n|2}(J|\tilde{X},X)d\tilde{X}.\label{W1formula0}
\end{equation}
This is an immediate consequence by the definition of the fermionic loop insertion operators where the $\delta_{k,0}$ factor is the adjustment due to the last term in \eqref{Floop}. Or equivalently, one can write \eqref{W1formula0} with the fermionic potential $\Psi(X)$ as
\begin{equation}
\underset{\tilde{X}\rightarrow\infty}{{\rm Res}}\left(\Psi(\tilde{X})+\xi_0\right)\frac{\partial}{\partial\Psi(\tilde{X})}=\sum_{l\geq0}\xi_l\frac{\partial}{\partial\xi_l},\label{I0}
\end{equation}
\begin{equation}
\mathcal{W}^{(1)}_{g,n|1}(J|X)=\underset{\tilde{X}\rightarrow\infty}{{\rm Res}}\left(\Psi(\tilde{X})+\xi_0\right)\mathcal{W}^{(0)}_{g,n|2}(J|\tilde{X},X)d\tilde{X}.\label{W1formula}
\end{equation}
Note that \eqref{I0} is proportional to the identity operator for $\xi_l$-dependent correlation functions. Similarly, $\mathcal{W}^{(2)}_{g,n|0}(J|)$ can be obtained by
\begin{equation}
\mathcal{W}^{(2)}_{g,n|0}(J|)=\frac12\underset{X\rightarrow\infty}{{\rm Res}}\left(\Psi(X)+\xi_0\right)\mathcal{W}^{(1)}_{g,n|2}(J|X)dX.\label{W2formula}
\end{equation}

Therefore, every quantity of our interest \eqref{T} in the model can be computed as long as we have all $\xi_l$-independent correlation functions $\mathcal{W}^{(0)}_{g,n|0}(J|)$ and $\mathcal{W}^{(0)}_{g,n|2}(J|X_1,X_2)$, and $\Psi(X)$.

\subsection{Super Loop Equations}

Our aim for the rest of the paper is to explore recursive relations among these correlation functions. The starting point is to derive the loop equations in the Ramond sector, and we try to uncover the recursive structure from them. Recall that for the NS sector, the bosonic and fermionic loop equations are derived based on the following two equations:
\begin{equation}
\sum_{n\geq-1}\frac{1}{x^{n+2}}L^{NS}_n\mathcal{Z}_{NS}=0,\;\;\;\;\sum_{r\geq-\frac12}\frac{1}{X^{r+\frac32}}G^{NS}_r\mathcal{Z}_{NS}=0
\end{equation}
After manipulating these equations in terms of the bosonic and fermionic loop insertion operators, we arrive at the bosonic and fermionic loop equations for the NS sector. See Appendix B in \cite{BO} for the detailed derivation.

\subsubsection{Bosonic Loop Equation}
For the Ramond sector, we are missing $L_{-1}\mathcal{Z}=0$ unlike the NS sector. Also, the partition function is not annihilated by $L_0$ any more, but rather it is shifted by the term proportional to the central charge. Thus, \cite{C} derived the analogous loop equation for the Ramond sector from the following equation\footnote{\cite{C} showed that this equation is equivalent to the invariance of the partition function under transformations of integration variables}:
\begin{equation}
\sum_{n\geq0}\frac{1}{x^{n+2}}L_n\mathcal{Z}=\frac{\mathcal{Z}}{16x^2}.
\end{equation}
After some manipulation, we can rewrite the above equation in terms of correlation functions which we call the \emph{bosonic loop equation}:
\begin{align}
0=&-\frac{N}{t}V'(x)\mathcal{W}^{(0,2)}_{1|0}(x|)+\frac{1}{2}\mathcal{W}^{(0,2)}_{2|0}(x,x|)+\frac{1}{2}\mathcal{W}^{(0,2)}_{1|0}(x|)^2+\mathcal{P}^{(0,2)}_{1|0}(x|)\nonumber\\
&+\frac{N}{2t}\left(\left(\frac{\Psi(z)}{x}-\Psi'(x)\right)\mathcal{W}^{(1)}_{0|1}(|x)+\Psi(x)\frac{d}{dx}\mathcal{W}^{(1)}_{0|1}(|x)\right)+\mathcal{Q}^{(2)}_{1|0}(x|)\nonumber\\
&-\frac{1}{2}x\mathcal{W}^{(1)}_{0|1}(|x)\frac{d}{dx}\mathcal{W}^{(1)}_{0|1}(|x)-\frac{1}{2}x\frac{d}{d\tilde{x}}\mathcal{W}^{(0)}_{0|2}(|x,\tilde{x})\biggr|_{x=\tilde{x}},\nonumber\\\label{Bloop}
\end{align}
where
\begin{align}
\mathcal{P}^{(0,2)}_{1|0}(x|)=&-\sum_{m\geq0}x^{m-1}\sum_{k\geq0}(m+k+1)(\delta_{m+k,0}+g_{m+k+1})\frac{\partial}{\partial g_k}\mathcal{F}^{(0,2)},\label{P}\\
\mathcal{Q}^{(2)}_{1|0}(x|)=&-\frac12\sum_{l\geq0}x^{l-1}\left(\sum_{m\geq0}\frac{l+2m+1}{1+\delta_{m,0}}\xi_{l+m+1}\frac{\partial}{\partial \xi_m}\right)\mathcal{F}^{(2)}.\label{Q}
\end{align}
See Appendix~\ref{sec:derivation} for the derivation. The $\delta_{m+k,0}$ factor in \eqref{P} is the remnant of $T=1$ whereas the $\delta_{k,0}$ factor in \eqref{Q} is a consequence of the fermionic loop insertion operator \eqref{Finsertion}. Note that \eqref{Bloop} has $\xi_l$-independent terms, terms quadratic in $\xi_l$, and a quartic one in $\xi_l$ where the quartic one simply gives $\mathcal{W}^{(2)}_{1|0}(x|)^2=0$. This is analogous to the one in the NS sector and a consequence of the quadratic truncation of the free energy.

Furthermore, by acting an arbitrary number of times with the bosonic loop insertion operator on the bosonic loop equation, we obtain the \emph{general bosonic loop equation}

\begin{align}
0=&-\frac{N}{t}V'(x)\mathcal{W}^{(0,2)}_{n+1|0}(x,J|)+\frac{1}{2}\mathcal{W}^{(0,2)}_{n+2|0}(x,x,J|)+\frac{1}{2}\sum_{J_1\cup J_2=J}\mathcal{W}^{(0,2)}_{n_1+1|0}(x,J_1|)\mathcal{W}^{(0,2)}_{n_2+1|0}(x,J_2|)\nonumber\\
&+\sum_{i=1}^n\frac{d}{dx_i}\frac{\mathcal{W}^{(0,2)}_{n|0}(x,J\backslash x_i|)-\mathcal{W}^{(0,2)}_{n|0}(J|)}{x-x_i}+\frac{1}{x}\sum_{i=1}^n\frac{d}{dx_i}\mathcal{W}^{(0,2)}_n(J|)+\mathcal{P}^{(0,2)}_{n+1|0}(x,J|)\nonumber\\
&+\frac{N}{2t}\left(\left(\frac{\Psi(z)}{x}-\Psi'(x)\right)\mathcal{W}^{(1)}_{n|1}(J|x)+\Psi(x)\frac{d}{dx}\mathcal{W}^{(1)}_{n|1}(J|x)\right)+\mathcal{Q}^{(2)}_{n+1|0}(x,J|)\nonumber\\
&-\frac{1}{2}x\sum_{J_1\cup J_2=J}\mathcal{W}^{(1)}_{n_1|1}(J_1|x)\frac{d}{dx}\mathcal{W}^{(1)}_{n_2|1}(J_2|x)-\frac{1}{2}x\frac{d}{d\tilde{x}}\mathcal{W}^{(0)}_{n|2}(J|x,\tilde{x})\biggr|_{x=\tilde{x}},\nonumber\\\label{GBloop}
\end{align}
where
\begin{align}
\mathcal{P}^{(0,2)}_{n+1|0}(x,J|)=&-\left(\frac{t}{N}\right)^n\sum_{m\geq0}x^{m-1}\sum_{k\geq0}(m+k+1)(\delta_{m+k,0}+g_{m+k+1})\frac{\partial}{\partial g_k}\prod_{i=1}^n\frac{\partial}{\partial V(x_i)}\mathcal{F}^{(0,2)}.\\
\mathcal{Q}^{(2)}_{n+1|0}(x,J|)=&-\frac12\left(\frac{t}{N}\right)^n\sum_{l\geq0}x^{l-1}\left(\sum_{m\geq0}\frac{l+2m+1}{1+\delta_{m,0}}\xi_{l+m+1}\frac{\partial}{\partial \xi_m}\right)\prod_{i=1}^n\frac{\partial}{\partial V(x_i)}\mathcal{F}^{(2)}.
\end{align}
and we used the following identities in the derivation:
\begin{align}
\frac{\partial}{\partial V(x_{n+1})}V'(x)=&\frac{d}{dx_{n+1}}\frac{1}{x_1-x},\\
\frac{t}{N}\frac{\partial}{\partial V(x_{n+1})}\mathcal{P}^{(0,2)}_{n+1|0}(x,J|)=&-\frac{d}{dx_{n+1}}\frac{x_{n+1}}{x}\frac{\mathcal{W}^{(0,2)}_{n+1|0}(J,x_{n+1}|)}{x-x_{n+1}}+\mathcal{P}^{(0,2)}_{n+2|0}(x,J,x_{n+1}|)\nonumber\\
=&\frac{1}{x}\frac{d}{dx_{n+1}}\mathcal{W}^{(0,2)}_{n+1|0}(J,x_{n+1}|)-\frac{d}{dx_{n+1}}\frac{\mathcal{W}^{(0,2)}_{n+1|0}(J,x_{n+1}|)}{x-x_{n+1}}+\mathcal{P}^{(0,2)}_{n+2|0}(x,J,x_{n+1}|).
\end{align}

Finally, if we consider the $1/N$ expansion of \eqref{GBloop}, the coefficients of $(N/t)^{2-2g-n}$ gives the following set of equations:
\begin{align}
0=&-V'(x)\mathcal{W}^{(0,2)}_{g,n+1|0}(x,J|)+\frac{1}{2}\mathcal{W}^{(0,2)}_{g-1,n+2|0}(x,x,J|)+\frac{1}{2}\sum_{\substack{g_1+g_2=g\\ J_1\cup J_2=J}}\mathcal{W}^{(0,2)}_{g_1,n_1+1|0}(x,J_1|)\mathcal{W}^{(0,2)}_{g_2,n_2+1|0}(x,J_2|)\nonumber\\
&+\sum_{i=1}^n\frac{d}{dx_i}\frac{\mathcal{W}^{(0,2)}_{g,n+1|0}(x,J\backslash x_i|)-\mathcal{W}^{(0,2)}_{g,n+1|0}(J|)}{x-x_i}+\frac{1}{x}\sum_{i=1}^n\frac{d}{dx_i}\mathcal{W}^{(0,2)}_{g,n}(J|)+\mathcal{P}^{(0,2)}_{g,n+1|0}(x,J|)\nonumber\\
&+\frac{1}{2}\left(\left(\frac{\Psi(z)}{x}-\Psi'(x)\right)\mathcal{W}^{(1)}_{g,n|1}(J|x)+\Psi(x)\frac{d}{dx}\mathcal{W}^{(1)}_{g,n|1}(J|x)\right)+\mathcal{Q}^{(2)}_{g,n+1|0}(x,J|)\nonumber\\
&-\frac{1}{2}x\sum_{\substack{g_1+g_2=g\\ J_1\cup J_2=J}}\mathcal{W}^{(1)}_{g_1,n_1|1}(J_1|x)\frac{d}{dx}\mathcal{W}^{(1)}_{g_2,n_2|1}(J_2|x)-\frac{1}{2}x\frac{d}{d\tilde{x}}\mathcal{W}^{(0)}_{g-1,n|2}(J|x,\tilde{x})\biggr|_{x=\tilde{x}},\nonumber\\\label{BWgn1}
\end{align}
where
\begin{align}
\mathcal{P}^{(0,2)}_{g,n+1|0}(x,J|)=&-\left(\frac{t}{N}\right)^n\sum_{m\geq0}x^{m-1}\sum_{k\geq0}(m+k+1)(\delta_{m+k,0}+g_{m+k+1})\frac{\partial}{\partial g_k}\prod_{i=1}^n\frac{\partial}{\partial V(x_i)}\mathcal{F}_g^{(0,2)}.\\
\mathcal{Q}^{(2)}_{g,n+1|0}(x,J|)=&-\frac12\left(\frac{t}{N}\right)^n\sum_{l\geq0}x^{l-1}\left(\sum_{m\geq0}\frac{l+2m+1}{1+\delta_{m,0}}\xi_{l+m+1}\frac{\partial}{\partial \xi_m}\right)\prod_{i=1}^n\frac{\partial}{\partial V(x_i)}\mathcal{F}_g^{(2)}.
\end{align}

Hereafter, we refer to \eqref{BWgn1} the bosonic loop equations rather than \eqref{Bloop} for simplicity.

\subsubsection{Fermionic Loop Equation}
For the fermionic loop equation, \cite{C} showed that the fermionic loop equation can be derived from the following half-integer power series 
\begin{equation}
\sum_{m\geq0}\frac{1}{X^{m+\frac32}}G_m\mathcal{Z}=0.\label{Floop1}
\end{equation}
However, as mentioned in Remark~\ref{rem:cor}, $X^{\frac12}$ does not get along well with known techniques developed for the Eynard-Orantin topological recursion. Therefore, we instead consider an equation induced by
\begin{equation}
\sum_{m\geq0}\frac{1}{X^{m+1}}G_m\mathcal{Z}=0.
\end{equation}
That is, we simply multiply by $X^{\frac12}$ \eqref{Floop1} to make it a power series expansion in $1/X$. As shown in Appendix~\ref{sec:derivation}, this power series yields:
\begin{align}
0=&-\frac{N}{t}\left(\Psi(X)\mathcal{W}_{1|0}^{(0,2)}(X|)+XV'(X)\mathcal{W}_{0|1}^{(1)}(|X)\right)\nonumber\\
&+X\Bigl(\mathcal{W}_{1|0}^{(0,2)}(X|)\mathcal{W}_{0|1}^{(1)}(|X)+\mathcal{W}_{1|1}^{(1)}(X|X)\Bigr)+\mathcal{R}_{0|1}^{(1,3)}(|X),\label{Floop}
\end{align}
where
\begin{equation}
\mathcal{R}_{0|1}^{(1,3)}(|X)=-\sum_{m,l\geq0}X^l\left(\xi_{l+m+1}\frac{\partial}{\partial g_m}+\frac{(l+m+1)}{1+\delta_{m,0}}(\delta_{l+m,0}+g_{l+m+1})\frac{\partial}{\partial \xi_m}\right)\mathcal{F}^{(0,2)}
\end{equation}

Similarly to the general bosonic loop equations, by acting an arbitrary number of times with the bosonic loop insertion operator on the fermionic loop equation, we obtain the \emph{general fermionic loop equations}
\begin{align}
0=&-\frac{N}{t}\left(\Psi(X)\mathcal{W}_{n+1|0}^{(0,2)}(X,J|)+XV'(X)\mathcal{W}_{n|1}^{(1)}(J|X)\right)\nonumber\\
&+\sum_{i=1}^n\frac{d}{dx_i}\frac{X\mathcal{W}_{n-1|1}^{(1)}(J\backslash x_i|X)-x_{n+1}\mathcal{W}_{n-1|1}^{(1)}(J\backslash x_i|x_i)}{X-x_i}\nonumber\\
&+X\Bigl(\sum_{J_1\cup J_2=J}\mathcal{W}_{n_1+1|0}^{(0,2)}(X,J_1|)\mathcal{W}_{n_2|1}^{(1)}(J_2|X)+\mathcal{W}_{n+1|1}^{(1)}(X,J|X)\Bigr)+\mathcal{R}_{n|1}^{(1,3)}(J|X),\label{GFloop}
\end{align}
where
\begin{align}
\mathcal{R}_{n|1}^{(1,3)}(J|X)=&-\left(\frac{t}{N}\right)^n\sum_{m,l\geq0}X^l\left(\xi_{l+m+1}\frac{\partial}{\partial g_m}+\frac{(l+m+1)}{1+\delta_{m,0}}(\delta_{l+m,0}+g_{l+m+1})\frac{\partial}{\partial \xi_{m}}\right)\nonumber\\
&\hspace{30mm}\times\prod_{i=1}^n\frac{\partial}{\partial V(x_i)}\mathcal{F}^{(0,2)},
\end{align}
and we used
\begin{equation}
\frac{t}{N}\frac{\partial}{\partial V(x_{n+1})}\mathcal{R}_{n|1}^{(1,3)}(J|X)=-\frac{d}{dx_{n+1}}\frac{x_{n+1}\mathcal{W}_{n|1}^{(1)}(J|x_{n+1})}{X-x_{n+1}}+\mathcal{R}_{n+1|1}^{(1,3)}(J,x_{n+1}|X).
\end{equation}
The $1/N$ expansion then gives
\begin{align}
0=&-\Psi(X)\mathcal{W}_{g,n+1|0}^{(0,2)}(X,J|)-XV'(X)\mathcal{W}_{g,n|1}^{(1)}(J|X)\nonumber\\
&+\sum_{i=1}^n\frac{d}{dx_i}\frac{X\mathcal{W}_{g,n-1|1}^{(1)}(J\backslash x_i|X)-x_{n+1}\mathcal{W}_{g,n-1|1}^{(1)}(J\backslash x_i|x_i)}{X-x_i}\nonumber\\
&+X\Bigl(\sum_{\substack{g_1+g_2=g\\J_1\cup J_2=J}}\mathcal{W}_{g_1,n_1+1|0}^{(0,2)}(X,J_1|)\mathcal{W}_{g_2,n_2|1}^{(1)}(J_2|X)+\mathcal{W}_{g-1,n+1|1}^{(1)}(X,J|X)\Bigr)+\mathcal{R}_{g,n|1}^{(1,3)}(J|X),\label{FWgn1}
\end{align}
where
\begin{align}
\mathcal{R}_{g,n|1}^{(1,3)}(J|X)&=-\left(\frac{t}{N}\right)^n\sum_{m,l\geq0}X^l\left(\xi_{l+m+1}\frac{\partial}{\partial g_{m}}+\frac{(l+m+1)}{1+\delta_{m,0}}(\delta_{l+m,0}+g_{l+m+1})\frac{\partial}{\partial \xi_{m}}\right)\nonumber\\
&\hspace{30mm}\times\prod_{i=1}^n\frac{\partial}{\partial V(x_i)}\mathcal{F}_g^{(0,2)}
\end{align}

Note that if we act with the fermionic loop insertion operator on the general fermionic loop equation \eqref{GFloop} and expand them in terms of both $1/N$ and $\xi_l$, then the $\xi_l$-independent coefficients order by order in $1/N$ gives the loop equation for $\mathcal{W}_{g,n|2}^{(0)}(J|X,\tilde{X})$
\begin{align}
0=&-XV'(X)\mathcal{W}_{g,n|2}^{(0)}(J|X,\tilde{X})+\frac{\mathcal{W}_{g,n+1|0}^{(0)}(X,J|)-\mathcal{W}_{g,n+1|0}^{(0)}(\tilde{X},J|)}{X-\tilde{X}}+\frac{\mathcal{W}_{g,n+1|0}^{(0)}(X,J|)}{2\tilde{X}}\nonumber\\
&+\sum_{i=1}^n\frac{d}{dx_i}\frac{X\mathcal{W}_{g,n-1|2}^{(0)}(J\backslash x_i|X,\tilde{X})-x_{n+1}\mathcal{W}_{g,n-1|2}^{(0)}(J\backslash x_i|x_i,\tilde{X})}{X-x_i}\nonumber\\
&+X\Bigl(\sum_{\substack{g_1+g_2=g\\J_1\cup J_2=J}}\mathcal{W}_{g_1,n_1+1|0}^{(0)}(X,J_1|)\mathcal{W}_{g_2,n_2|2}^{(0)}(J_2|X,\tilde{X})+\mathcal{W}_{g-1,n+1|2}^{(0)}(X,J|X,\tilde{X})\Bigr)+\mathcal{R}_{g,n|2}^{(0)}(J|X,\tilde{X}),\label{FWgn2}
\end{align}
where
\begin{equation}
\mathcal{R}_{g,n|2}^{(0)}(J|X,\tilde{X})=-\left(\frac{t}{N}\right)^{n+1}\sum_{m,l\geq0}X^l\frac{(l+m+1)}{1+\delta_{m,0}}(\delta_{l+m,0}+g_{l+m+1})\frac{\partial}{\partial \xi_{m}}\frac{\partial}{\partial \Psi(\tilde{X})}\prod_{i=1}^n\frac{\partial}{\partial V(x_i)}\mathcal{F}_g^{(2)},
\end{equation}
and we used
\begin{align}
\frac{\partial}{\partial \Psi(\tilde{X})}\Psi(X)=&\frac{1}{X-\tilde{X}}+\frac{1}{2\tilde{X}},\\
-\frac{t}{N}\frac{\partial}{\partial \Psi(\tilde{X})}\mathcal{R}_{n|1}^{(1)}(J|X)=&-\frac{\mathcal{W}_{n+1}^{(0)}(\tilde{X},J|)}{X-\tilde{X}}+\mathcal{R}_{n|2}^{(0)}(J|X,\tilde{X}).
\end{align}

We abuse the terminology below and we often refer to \eqref{FWgn1} or \eqref{FWgn2} the fermionic loop equations, which should be clear from the context.

In the next section, we study recursive structures for $\mathcal{W}_{g,n|0}^{(0)}(J|)$ and $\mathcal{W}_{g,n|2}^{(0)}(J|X,\tilde{X})$ by carefully observing the bosonic and fermionic loop equations \eqref{BWgn1} and \eqref{FWgn2}.

\section{Recursion in the Ramond Sector}\label{sec:recursion}

The bosonic and fermionic loop equations for the Ramond sector are complicated. For the NS sector, \cite{BO} presented that all correlation functions can be computed by the Eynard-Orantin topological recursion in conjunction with an auxiliary Grassmann-valued polynomial equation. However, their arguments are not directly solving the loop equations, but rather they heavily depend on Becker's formula \cite{Beckers,McArthur}. With the lack of such a powerful simplification, we directly research the loop equations, and present interesting phenomena in the Ramond sector.

\subsection{Planar Limit}
From now on, we set the potentials $V(x),\Psi(x)$ to be polynomials of degree $d_V,d_{\Psi}$ respectively. Explicitly, we impose
\begin{equation}
g_0=g_1=0,\;\;\;\;g_{k\geq d_V+1}=0,\;\;\;\;g_{d_V}=\frac{1}{d_V},\;\;\;\;\xi_{l\geq d_{\Psi}+1}=0,\label{choice}
\end{equation}
where $d_V$ and $d_{\Psi}$ can be chosen independently and $g_{d_V}$ is just a convenient choice. In this setting, all $x\cdot\mathcal{P}^{(0,2)}_{g,n+1|0}(x,J|)$, $x\cdot\mathcal{Q}^{(2)}_{g,n+1|0}(x,J|)$, $\mathcal{R}^{(1,3)}_{g,n|1}(J|X)$, and $\mathcal{R}^{(0)}_{g,n|2}(J|X,\tilde{X})$ become polynomials in $x$ or $X$. In particular, $x\cdot\mathcal{P}^{(0)}_{g,n+1|0}(x,J|)$ are of degree $d_V-2+\delta_{g,0}\delta_{n,0}$, and $\mathcal{R}^{(0)}_{g,n|2}(J|X,\tilde{X})$ are of degree $d_V-1$.

\subsubsection{Spectral Curve}

Let us consider the $\xi_l$-independent bosonic loop equation \eqref{BWgn1} for $g=n=0$:
\begin{equation}
0=-V'(x)\mathcal{W}^{(0)}_{0,1|0}(x|)+\frac12\left(\mathcal{W}^{(0)}_{0,1|0}(x|)\right)^2+\mathcal{P}^{(0)}_{0,1|0}(x|).\label{W01}
\end{equation}
Equivalently, if we define
\begin{equation}
y=\mathcal{W}^{(0)}_{0,1|0}(x|)-V'(x),\label{y}
\end{equation}
then we can write \eqref{W01} as
\begin{equation}
xy^2=x\left(V'(x)\right)^2-2\hat{\mathcal{P}}^{(0)}_{0,1|0}(x|)\label{curve1}
\end{equation}
where
\begin{equation}
\hat{\mathcal{P}}^{(0)}_{0,1|0}(x|)=-\sum_{m\geq0}x^m\sum_{k\geq0}(m+k+1)(\delta_{m+k,0}+g_{m+k+1})\frac{\partial}{\partial g_k}\mathcal{F}_0^{(0)}
\end{equation}
is a polynomial in $x$ of degree $d_V-1$. 

We can prove that \eqref{curve1} is indeed a genus-zero algebraic curve. More precisely, we shall prove the following proposition:
\begin{proposition}\label{prop:curve}
The algebraic curve \eqref{curve1} is factorized as
\begin{equation}
xy^2=(x-\alpha_0(t))\prod_{i=1}^{d_V-1}(x-\alpha_i(t))^2,\label{curve}
\end{equation}
where $\{\alpha_{\mu}\}$ are formal power series in $t$ such that $\alpha_0(0)=0$ and $\alpha_i(0)\neq0$.
\end{proposition}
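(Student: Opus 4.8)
The plan is to read \eqref{curve1} as a statement about a single polynomial and then recover its factorization from the analytic structure of the planar resolvent. Write
\[
Q(x):=x\bigl(V'(x)\bigr)^2-2\hat{\mathcal{P}}^{(0)}_{0,1|0}(x|),
\]
so that \eqref{curve1} reads $xy^2=Q(x)$ with $y=\mathcal{W}^{(0)}_{0,1|0}(x|)-V'(x)$ as in \eqref{y}. By \eqref{choice} the polynomial $V'$ is monic of degree $d_V-1$ with $V'(0)=1$, and $\hat{\mathcal{P}}^{(0)}_{0,1|0}$ has degree $d_V-1$; hence $Q$ is a monic polynomial of degree $2d_V-1$ whose coefficients lie in $\mathbb{R}[[t]]$. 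The assertion \eqref{curve} is exactly that $Q$ has one simple and $d_V-1$ double roots, equivalently that $y=-M(x)\sqrt{(x-\alpha_0)/x}$ for a monic polynomial $M$ of degree $d_V-1$ and a scalar $\alpha_0$.

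First I would settle the leading order in $t$. Since $\mathcal{F}_0$ is a genuine power series in $t$ with no constant term, the planar one-point function is $O(t)$ (cf.\ Corollary~\ref{coro:poly}), so $\hat{\mathcal{P}}^{(0)}_{0,1|0}|_{t=0}=0$ and $Q(x)|_{t=0}=x(V'(x))^2$. This already exhibits the desired pattern at $t=0$: the simple root is $\alpha_0(0)=0$, while the $d_V-1$ double roots are the roots of $V'$, all nonzero because $V'(0)=1$ (a consequence of $g_0=g_1=0$ and $T=1$). Crucially $x=0$ is a \emph{simple} root of $Q|_{t=0}$, so it persists as a simple root $\alpha_0(t)$, a formal power series with $\alpha_0(0)=0$.

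The heart of the proof is lifting the factorization to all orders in $t$ via the one-cut ansatz $y=-M(x)\sqrt{(x-\alpha_0)/x}$ with $M$ monic of degree $d_V-1$ and $\alpha_0=O(t)$; because $\alpha_0=O(t)$, the expansion $\sqrt{1-\alpha_0/x}=1-\tfrac{\alpha_0}{2x}-\cdots$ is a well-defined element of $\mathbb{R}[[t]]$ with coefficients polynomial in $1/x$, consistent with Corollary~\ref{coro:poly}. The free data $(\alpha_0,M)$, i.e.\ $d_V$ unknowns, are fixed by requiring that $\mathcal{W}^{(0)}_{0,1|0}=V'+y$ be a pure series in $1/x$ (the non-negative powers of $x$ cancel, giving $[M\sqrt{1-\alpha_0/x}]_{\geq0}=V'$) together with the normalization of the leading $1/x$-coefficient (the filling fraction, fixed by the model to be $O(t)$); this is $d_V$ equations for $d_V$ unknowns with solution $(0,V')$ at $t=0$. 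The Jacobian there is nonsingular — the columns varying the coefficients of $M$ form the block $-I$, while expanding along the normalization row leaves the single nonzero entry $\tfrac12 V'(0)=\tfrac12$ — so the implicit function theorem produces unique formal power series $\alpha_0(t)$ and $M(x;t)=\prod_{i=1}^{d_V-1}(x-\alpha_i(t))$. To check this candidate is the true resolvent I would note that $xy^2=(x-\alpha_0)M^2$ makes $y^2$ have only a \emph{simple} pole at $x=0$, so $V'\mathcal{W}-\tfrac12\mathcal{W}^2=\tfrac12((V')^2-y^2)$ is a Laurent polynomial with a simple pole at $0$ and polynomial part of degree $\le d_V-2$ — precisely the structure of $\mathcal{P}^{(0)}_{0,1|0}$ dictated by \eqref{P} and \eqref{choice}. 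Thus the candidate solves \eqref{W01}, and since that equation together with ``pure $1/x$-series'' and a prescribed leading coefficient fixes $\mathcal{W}^{(0)}_{0,1|0}$ uniquely order by order in $t$ (the quadratic term being higher order in $t$, the coefficients are determined recursively), the candidate equals $\mathcal{W}^{(0)}_{0,1|0}$ and $Q=(x-\alpha_0)M^2$, which is \eqref{curve}.

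I expect the main obstacle to be exactly the point that a deformation of $Q$ preserves the one-simple/rest-double multiplicity pattern: this is false for an arbitrary deformation and must be extracted from the loop equation. The mechanism above sidesteps a direct multiplicity analysis by constructing the one-cut ansatz and invoking uniqueness, so the genuine work is (i) the Jacobian nondegeneracy of the matching conditions, which rests on the hard edge being pinned at $x=0$ through $V'(0)=1$, and (ii) the order-by-order uniqueness of the loop-equation solution. A secondary point to confirm is that the simple-pole (irregular/Bessel) behavior at $x=0$ is genuinely present for $t\neq0$, i.e.\ that $\alpha_0(t)\not\equiv0$, so that $0$ and $\alpha_0$ are the two distinct ramification points promised in the introduction.
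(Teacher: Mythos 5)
Your argument is correct, but it establishes the factorization by a genuinely different mechanism than the paper. The paper follows Brown's one-cut lemma literally: it expands the $2d_V-1$ roots of $x\bigl(V'(x)\bigr)^2-2\hat{\mathcal{P}}^{(0)}_{0,1|0}(x|)$ as Puiseux series in $t$, observes that roots with $c_0\neq0$ satisfy $V'(c_0)=0$ and arise in pairs from a quadratic equation at order $t^2$, forces each pair to coincide (else $\mathcal{W}^{(0)}_{0,1|0}$ would not be a polynomial in $1/x$ order by order in $t$), and checks that the single root with $c_0=0$ stays simple because $V'(0)\neq0$. You instead posit the one-cut ansatz $y=-M(x)\sqrt{1-\alpha_0/x}$, solve for $(\alpha_0,M)$ by the formal implicit function theorem (your Jacobian computation at the base point $(0,V')$ is right: the system is block-triangular with determinant $\pm\tfrac12 V'(0)=\pm\tfrac12$), and then identify the candidate with the true resolvent by uniqueness of the loop-equation solution. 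Your route is constructive and is insensitive to multiple roots of $V'$, so it avoids the paper's footnoted caveat about $V''(c_0)=0$; the paper's route is shorter because it never sets up the matching and uniqueness machinery. One identity that your verification and uniqueness steps both silently rest on should be made explicit: since $\partial_{g_k}\mathcal{F}^{(0)}_0$ is minus the coefficient of $x^{-k-1}$ in $\mathcal{W}^{(0)}_{0,1|0}$, the definition \eqref{P} gives $\mathcal{P}^{(0)}_{0,1|0}(x|)$ identically equal to the part of $V'(x)\,\mathcal{W}^{(0)}_{0,1|0}(x|)$ carrying powers of $x$ at least $-1$, for \emph{any} $1/x$-series in place of $\mathcal{W}^{(0)}_{0,1|0}$. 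Hence \eqref{W01} is precisely the condition that $V'\mathcal{W}-\tfrac12\mathcal{W}^2$ contains no terms $x^{-2},x^{-3},\dots$; this is what makes your ``right shape'' check equivalent to solving the actual equation, and what lets the descending-in-degree induction (using Corollary~\ref{coro:poly} to know that each $t$-coefficient is a polynomial in $1/x$) close the uniqueness argument.
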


\begin{proof}
We proceed by a similar technique used to derive the 1-cut Brown's lemma \cite{Brown,Brown2} for Hermitian matrix models (See Lemma 3.1.1 in \cite{Eynard}). 

Thanks to \eqref{Ft} and Assumption~\ref{assumption}, the $t$-dependence of $\hat{\mathcal{P}}^{(0)}_{0,1|0}(x|)$ is
\begin{equation}
\hat{\mathcal{P}}^{(0)}_{0,1|0}(x|)=2tV'(x)+\mathcal{O}(t^2).
\end{equation}
Thus, the $t$-expansion of the right-hand-side of \eqref{curve1} is given by
\begin{equation}
\text{Pol}(x)=x\bigl(V'(x)\bigr)^2-4tV'(x)+\mathcal{O}(t^2).\label{curve2}
\end{equation}
Let $\{\alpha_0(t),\alpha_1(t),...,\alpha_{2d+1}(t)\}$ be the set of $2d+1$ roots of $\text{Pol}(x)$. If we expand one of them as a power series in $t^p$, we have
\begin{eqnarray}
\alpha(t)&=&c_0+c_pt^p+c_{2p}t^{2p}+\cdots,\nonumber\\
V'(\alpha(t))&=&V'(c_0)+V''(c_0)\left(c_pt^p+c_{2p}t^{2p}+\cdots\right)+\frac{1}{2!}V'''\left(c_pt^p+c_{2p}t^{2p}+\cdots\right)^2+\cdots\nonumber\\
\bigl(V'(\alpha(t))\bigr)^2&=&\bigl(V'(c_0)\bigr)^2+2V'(c_0)V''(c_0)c_pt^p+\left(\bigl(V''(c_0)\bigr)^2c_p^2+V'(c_0)V''(c_0)c_{2p}\right)t^{2p}+\cdots\nonumber\\
\alpha(t)\bigl(V'(\alpha(t))\bigr)^2&=&\bigl(V'(c_0)\bigr)^2c_0+\left(\bigl(V'(c_0)\bigr)^2c_p+2V'(c_0)V''(c_0)c_0c_p\right)t^p\nonumber\\
&&+\biggl(\bigl(V'(c_0)\bigr)^2c_{2p}+2V'(c_0)V''(c_0)c_p^2\nonumber\\
&&+\left(\bigl(V''(c_0)\bigr)^2c_p^2+V'(c_0)V''(c_0)c_{2p}\right)c_0\biggr)t^{2p}+\cdots,\nonumber\\
\end{eqnarray}
where $c_p\neq0$. By the assumption of $\alpha(t)$, \eqref{curve2} should vanish order by order in $t^{p}$, hence, we get
\begin{equation}
\bigl(V'(c_0)\bigr)^2c_0=0.
\end{equation}
We note that $V'(0)\neq0$.

If $c_0\neq0$, $V'(c_0)=0$ and the polynomial becomes
\begin{equation}
\text{Pol}(\alpha(t))=\left(\bigl(V''(c_0)\bigr)^2c_p^2\right)t^{2p}+\mathcal{O}(t^2,t^{3p})=0.\label{curve3}
\end{equation}
Let us assume $V''(c_0)\neq0$\footnote{If $V''(c_0)=0$, we simply expand it to higher orders, similar to the 1-cut Brown's lemma \cite{Brown,Brown2}.} Then since $c_p\neq0$ by construction, $p=1$. Also, the coefficient of the $t^2$ term in \eqref{curve3} gives a quadratic equation in $c_1$, hence they always appear as a pair $c_{1+},c_{1-}$. Then similar to the 1-cut Brown's lemma, we can show that they must be identical otherwise $W_{0,1}^{(0)}(|x)$ does not become a polynomial in $1/x$ order by order in $t$. (See Lemma 3.1.1 in \cite{Eynard}.)

If $c_0=0$, 
\begin{equation}
\text{Pol}^{2d+1}(X(t))=\left(\bigl(V'(0)\bigr)^2c_p\right)t^p-4tV'(0)+\mathcal{O}(t^2,t^{2p})=0.
\end{equation}
Then, $V'(0)\neq0$ implies that $p=1$ and
\begin{equation}
c_1=\frac{4t}{V'(0)}.
\end{equation}
This means that there is only one root if $c_0=0$, which is consistent with the argument that there are $2d_V-1$ roots and $2d_V-2$ of them are pairwise as previously discussed in the case $c_0\neq0$. This proves Proposition~\ref{prop:curve}. 
\end{proof}

\subsubsection{Parametrization and Ramification Points}

As similar to Hermitian matrix models, we now re-sum the power series \eqref{curve}, and we rather interpret the coefficients in powers of $t$ as Taylor expansions of actual functions of t. From this point of view, we think of \eqref{curve} as a $t$-dependent family of genus zero curves, and evaluate correlation functions as living on the curve.

Since \eqref{curve} is a genus-zero algebraic curve, we can parametrize $x$ and $y$ with rational functions. Let $z$ be a local coordinate on the Riemann sphere, then one such parametrization is
\begin{equation}
x(z)=\frac{\alpha_0(t)}{1-z^2},\;\;\;\;y(z)=z\prod_{i=1}^n(x(z)-\alpha_i(t)),\label{parametrization}
\end{equation}
The curve \eqref{curve} possesses a natural involution $\sigma:y(z)\mapsto-y(z)$ and $\sigma:x(z)\mapsto x(z)$, which we call the hyperelliptic involution. In terms of the local coordinate $z$, it is simply given by $\sigma(z)=-z$. $x(z)$ has a simple pole at $z=\pm1$, and a double zero at $z=\infty$. Accordingly, $dx(z)$ has a simple zero at $z=0$ and at $z=\infty$ as one can see explicitly from
\begin{equation}
dx=\frac{2\alpha_0(t)z}{(1-z^2)^2}dz.
\end{equation}
Thus, the curve has ramification points at $z=0$ and at $z=\infty$. Also, we define a differential on the curve associated with $y(z)$ as
\begin{equation}
\omega_{0,1|0}^{(0)}(z|)=\frac12y(z)dx(z).\label{w10}
\end{equation}

\begin{remark}
$y(z)$ has a simple zero at $z=0$ whereas it has a simple pole at $z=\infty$. Therefore, unlike two ramification points for supereigenvalue models in the NS sector, the ramification point at $z=0$ are the Airy-like (regular ramification) point whereas the other one at $z=\infty$ is the Bessel-like (irregular ramification) point. See \cite{Bessel} for a Bessel-type topological recursion.
\end{remark}

For simplicity, we omit the $z$-dependence of $x(z),dx(z)$ with the understanding that they are a meromorphic function and a differential on the curve \eqref{curve}.

\subsubsection{Bilinear Differentials of the Second Kind}

We now upgrade correlation functions $\mathcal{W}_{g,n|m}^{(0)}(J|K)$ to multilinear differentials defined on the spectral curve \eqref{curve} and uncover their pole structure.

To begin with, let us define a bilinear differential $\omega_{0,2|0}(z_1,z_2|)$ as
\begin{equation}
\omega_{0,2|0}^{(0)}(z_1,z_2|)=-\frac12W_{0,2|0}^{(0)}(\sigma(z_1),z_2|)dx_1dx_2,
\end{equation}
where $dx_i=dx(z_i)$ and the $\frac12$ is inserted just for convention. By this equality, we mean that the Taylor expansion near $t = 0$ of the multilinear differential on the left hand side recovers the formal series of the correlation functions on the right hand side. This concept should apply to every other differentials that we shall define below. Then, we can rewrite the $\xi_l$-independent bosonic loop equation \eqref{BWgn1} for $g=0,n=1$ in terms of $\omega_{0,2|0}^{(0)}(z_1,z_2|)$ as
\begin{equation}
\omega_{0,2}^{(0)}(z,z_1|)=\frac{dxdx_1}{2y(\sigma(z))}\left(\frac{d}{dx_1}\frac{W_{0,1|0}^{(0)}(\sigma(z)|)-W_{0,1|0}^{(0)}(z_1)|}{x-x_1}+\frac{1}{x}\frac{d}{dx_1}W^{(0)}_{0,1|0}(z_1|)+P_{0,2|0}^{(0)}(x;z_1|)\right),\label{w02}
\end{equation}
where $\sigma(z)=-z$. Then, the same argument used for Hermitian matrix models can be immediately applied to investigate the pole structure of $\omega_{0,2|0}^{(0)}(z,z_1|)$. See for example Section 2.3.2 of \cite{BO}. The only difference is that the right hand side of \eqref{w02} may have a pole at the zeros of $x$; the irregular ramification point due to the last two terms in \eqref{w02}. However, as one can see, the combination $dx/y(z)x$ has no pole at $z=\infty$. Therefore, we conclude that
\begin{equation}
\omega^{(0)}_{0,2|0}(z,z_1|)=B(z,z_1)=\frac{dzdz_1}{(z-z_1)^2},\label{B}
\end{equation}
\begin{equation}
\omega^{(0)}_{0,2|0}(z,z_1|)+\omega^{(0)}_{0,2|0}(\sigma(z),z_1|)=\frac{dxdx_1}{(x-x_1)^2}.\label{B1}
\end{equation}
$B(z,z_1)$ is sometimes called the \emph{bilinear differential of the second kind} or the \emph{Bergman kernel}. It is uniquely defined on the Riemann Sphere as explicitly given in \eqref{B}, and we would need to specify the holomorphic part for Riemann surfaces with nonzero genus.

\begin{remark}\label{alpha}
Since we use this argument repetitively, let us show why there is no pole at $x=\alpha_i$. From \eqref{w02}, if  $\omega^{(0)}_{0,2|0}(z,z_1|)$ had a pole at  $x=\alpha_i$, it would be in the form $1/(x-\alpha_i(t))$. Since $\alpha_i(t)$ is a power series in $t$ with $\alpha_i(0)\neq0$, if we Taylor-expand it around $t\rightarrow0$, it gives
\begin{equation}
\frac{1}{x-\alpha_i(t)}=\sum_{k\geq0}\frac{\left(\alpha_i(0)\right)^j}{x^{j+1}}+\mathcal{O}(t).
\end{equation}
This would contribute an infinite series in $1/x$ for any fixed power of $t$, which is in contradiction to Corollary~\ref{coro:poly}. Therefore, $\omega^{(0)}_{0,2|0}(z,z_1|)$ has no pole at those points.
\end{remark}

One can follow the argument given for Hermitian matrix models further for all other $\mathcal{W}^{(0)}_{0,n+1|0}(z,J|)$. Let us define multilinear differentials by
\begin{equation}
\omega_{0,n+1|0}^{(0)}(z,J)=\frac12\mathcal{W}^{(0)}_{0,n+1|0}(z,J|)dxdx_1\cdots dx_n,
 \end{equation}
again this equality means that after Taylor expanding the left hand side near $t = 0$ we recover the formal series of the correlation functions on the right hand side. Then, we can show that by induction in $n\geq2$,
\begin{equation}
\omega_{0,n+1|0}^{(0)}(z,J|)+\omega_{0,n+1|0}^{(0)}(\sigma(z),J|)=0,\label{involution1}
\end{equation}
and $\omega_{0,n+1|0}^{(0)}(z,J)$ have poles only at the regular ramification point $z=0$. Note that to prove these results, we used the argument in Remark~\ref{alpha}, the fact that the combination $dx/y(z)x$ has no pole at $z=\infty$, and also the fact that $\mathcal{P}^{(0)}_{0,n+1|0}(x,J|)$ behaves as $x^{d_V-3}$ in the limit $x\rightarrow\infty$. This implies that the bosonic loop equation \eqref{BWgn1} with $g=0$ and $n\geq2$ is simply written by
\begin{equation}
\omega^{(0)}_{0,n+1|0}(z,J|)=\frac{1}{2\omega_{0,1|0}^{(0)}(z|)}\sum_{J_1\cup J_2=J}^*\omega^{(0)}_{0,n_1+1|0}(z,J_1|)\omega^{(0)}_{0,n_2+1|0}(\sigma(z),J_2|)\;+\;(\text{regular at $z=0$}),
\end{equation}
where $\sum^*$ excludes the term involving $\omega_{0,1|0}^{(0)}(z|)$. Therefore, we arrive at the $g=0$ Eynard-Orantin topological recursion 
\begin{equation}
\omega^{(0)}_{0,n+1|0}(z,J|)=\underset{q\rightarrow0}{\text{Res}}\;K(z,q,\sigma(q))\sum_{J_1\cup J_2=J}^*\omega^{(0)}_{0,n_1+1|0}(q,J_1|)\omega^{(0)}_{0,n_2+1|0}(\sigma(q),J_2|),
\end{equation}
\begin{equation}
K(z,q,\sigma(q))=\frac{\int^q_{\sigma(q)}\omega_{0,2|0}^{(0)}(\cdot,z)}{2(\omega_{0,1|0}^{(0)}(q|)-\omega_{0,1|0}^{(0)}(\sigma(q)|))},\label{K}
\end{equation}
where to derive \eqref{K} we used the fact that the sum of all residues is zero on the Riemann sphere. Note that the non-recursive polynomial terms $P_{0,n+1|0}^{(0)}(x,J|)$ do not appear in the formula, which is a main benefit of the Eynard-Orantin topological recursion in terms of solving the loop equations of matrix models.

\begin{remark}
The absence of poles at $z=\infty$ is not a surprise. Poles at Bessel-like points would appear in $\omega_{g,n+1}(z,J)$ for $g\geq1$ in the framework of the Eynard-Orantin topological recursion.
\end{remark}

\subsection{Differentials From Fermionic Loop Equations}

We now study the fermionic loop equations and investigate the pole structure. This part is new compared to what is discussed in \cite{BO}, and we find a recursive structure without the simplification of Becker's formula.

\subsubsection{Supersymmetric Partner}
To begin with, let us derive a supersymmetric partner of \eqref{curve}. The fermionic loop equation \eqref{FWgn1} for $g=n=0$ can be written as an Grassmann-valued equation on the curve \eqref{curve} as
\begin{equation}
0=-\Psi(x)V'(x)+y(z)\left(x\mathcal{W}^{(1)}_{0,0|1}(|z)-\Psi(x)\right)+\mathcal{R}^{(1)}_{0,0|1}(|x). \label{fcurve1}
\end{equation}
Equivalently, one may define $\gamma(z)$ by
\begin{equation}
\gamma(z)=\mathcal{W}^{(1)}_{0,0|1}(|z)-\frac{\Psi(x)}{x},
\end{equation}
then we obtain a Grassmann-valued polynomial equation among $y,x,\gamma$ as
\begin{equation}
xy(z)\gamma(z)=\Psi(x)V'(x)-\mathcal{R}^{(1)}_{0,0|1}(|x).\label{partner}
\end{equation}
This can be thought of as a supersymmetric partner of the spectral curve \eqref{curve}. This resembles to what is discussed in \cite{BO} for the NS sector. Similar to \eqref{w10}, we define a Grassmann-valued differential
\begin{equation}
\omega_{0,0|1}^{(1)}(|z)=\frac12\gamma(z)dx
\end{equation}

\begin{remark}
In this paper, we define $\gamma(z)$ as a Grassmann-valued meromorphic function defined on the \emph{ordinal} Riemann sphere associated with the algebraic curve \eqref{curve}. We leave to future work a potential interpretation in terms of a super Riemann surface.
\end{remark}

As we will use later for the pole structure of $\mathcal{W}^{(0)}_{0,0|2}(|z,\tilde{z})$, let us look at the $\xi_0$-dependence of $\mathcal{W}^{(1)}_{0,0|1}(|z)$. If we take the derivative of \eqref{fcurve1} with respect to $\xi_0$, we have\footnote{Strictly speaking, we need to act the derivative with respect to $\xi_0$ on the fermionic loop equation \eqref{Floop} as a power series of the couplings before we fix the potentials as in \eqref{choice}. Thus, \eqref{0W^1} is nonzero even if we set $\xi_0=0$.}
\begin{equation}
\frac{\partial}{\partial\xi_0}\mathcal{W}^{(1)}_{0,0|1}(|z)dx=\frac{dx}{xy(z)}\left(y(z)+V'(x)-\frac{\partial}{\partial\xi_0}\mathcal{R}^{(1)}_{0,0|1}(|x)\right).\label{0W^1}
\end{equation}
The first term can be simply written as $dx/x$. It has no pole at the zeros of $y(z)=0$ due to the same reason as Remark~\ref{alpha}; otherwise it contradicts to Corollary~\ref{coro:poly} that it is a polynomial of $1/x$ order by order in $t$. The last term in \eqref{0W^1} is a polynomial in $x$ of degree $d_V-2$, not $d_V-1$ due to the derivative with respect to $\xi_0$, hence it does not give a pole at $x=\infty$. Therefore, this term has no contribution to the pole structure. Note that the last term on its own has poles at the zeros of $y(z)$, but they are precisely cancelled by the other terms in \eqref{0W^1}. The second term has a pair of simple poles at $z=1$, and another pole at the opposite sign at $z=\sigma(1)$. Also, it would have another pair of the same simple poles at $z=\sigma(-1),-1$. Therefore, we obtain
\begin{equation}
\frac{\partial}{\partial\xi_0}\mathcal{W}^{(1)}_{0,0|1}(|z)dx=\frac{dx}{x}+\omega^{1}_{-1}(z)=\frac{2zdz}{1-z^2}+\frac{2dz}{(z-1)(z+1)},\label{0W^1dx}
\end{equation}
where $\omega^a_b(z)$ is the fundamental differential of the third kind, that is, it is a differential that has a simple pole at $z=a$ and another pole with the opposite sign at $z=b$. Note that if $a,b$ are the poles of $x$, i.e., $a=1,b=-1$ then it obeys an additional property
\begin{equation}
\omega^1_{-1}(\sigma(z))=-\omega^1_{-1}(z).
\end{equation}
\begin{remark}
Note that the pole structure of \eqref{0W^1dx}  is independent of the degree of the curve \eqref{curve}. Therefore, \eqref{0W^1dx} holds for any choice of $V(x),\Psi(x)$.
\end{remark}

\subsubsection{Antisymmetric Bilinear Differentials}

We now turn to probe the pole structure of $\mathcal{W}^{(0)}_{0,0|2}(|z,z_1)$. We define an antisymmetric bilinear differential 
\begin{equation}
\omega_{0,0|2}^{(0)}(|z,\tilde{z})=-\frac12\mathcal{W}^{(0)}_{0,0|2}(|\sigma(z),\tilde{z})dxd\tilde{x}\label{B2}
\end{equation}
Then, if we substitute $g=n=0$ for \eqref{FWgn2}, we get
\begin{equation}
\omega_{0,0|2}^{(0)}(|z,\tilde{z})=\frac{dxd\tilde{x}}{2xy(\sigma(z))}\left(\frac{\mathcal{W}_{0,1|0}^{(0)}(\sigma(z)|)-\mathcal{W}_{0,1|0}^{(0)}(\tilde{z}|)}{x-\tilde{x}}+\frac{\mathcal{W}_{0,01|0}^{(0)}(\sigma(z)|)}{2\tilde{x}}+\mathcal{R}_{0,0|2}^{(0)}(|x,\tilde{z})\right).\label{B23}
\end{equation}
It is straightforward to see
\begin{equation}
\omega_{0,0|2}^{(0)}(|z,\tilde{z})+\omega_{0,0|2}^{(0)}(|\sigma(z),\tilde{z})=\frac{x+\tilde{x}}{2x\tilde{x}(x-\tilde{x})}dxd\tilde{x}.\label{B3}
\end{equation}

Recall that $y(z)\sim\pm V'(x)$ at $z\rightarrow\pm1$ ($x\rightarrow\infty$), and also that $\mathcal{R}_{0,0|2}^{(0)}(|z,\tilde{z})$ is a polynomial in $x$ of degree $d_V-1$ whose coefficient of $x^{dV-1}$ is precisely \eqref{0W^1dx} divided by -2. Then, the pole structure of the antisymmetric bilinear differential $\omega_{0,0|2}^{(0)}(|z,\tilde{z})$ is summarized as follows:
\begin{equation}
\omega_{0,0|2}^{(0)}(|z,\tilde{z})\sim
\begin{cases}
\frac{dz}{z-\tilde{z}}\frac{d\tilde{x}}{\tilde{x}} & z\rightarrow\tilde{z}\\
\frac{dz}{z\mp1}\left(-\frac{d\tilde{x}}{4\tilde{x}} \pm \frac14\omega^1_{-1}(\tilde{z})\right)& z\rightarrow\pm1\\
\frac{dz}{z-\sigma(\pm1)}\left(-\frac{d\tilde{x}}{4\tilde{x}} \mp \frac14\omega^1_{-1}(\tilde{z})\right)& z\rightarrow\sigma(\pm1)\\
\frac{dz}{z}\left(-\frac{d\tilde{x}}{2\tilde{x}}\right) & z\rightarrow\infty,\sigma(\infty)
\end{cases}.\label{list}
\end{equation}
Note that since $x$ has a double zero at $z=\infty$, the pole at $x\rightarrow0$ in terms of the local coordinate $z\rightarrow\infty$ is not $-d\tilde{x}/4\tilde{x}$, but $-d\tilde{x}/2\tilde{x}$. With the parametrization \eqref{parametrization}, we can write $\omega_{0,0|2}^{(0)}(|z,\tilde{z})$ in the form
\begin{equation}
\omega_{0,0|2}^{(0)}(|z,\tilde{z})=\frac{(z+\tilde{z})(1-z\tilde{z})}{(z-\tilde{z})(1-z^2)(1-\tilde{z}^2)}dzd\tilde{z}.\label{fff}
\end{equation}
One can explicitly check that \eqref{fff} reproduces all the pole structure listed in \eqref{list}. Note that after some simplification, we can write it in the following expression:
\begin{equation}
\omega_{0,0|2}^{(0)}(|z,\tilde{z})=\frac{x^2-\tilde{x}^2}{2x\tilde{x}}B(z,\tilde{z})+\frac18\alpha_0(t)\frac{x-\tilde{x}}{x\tilde{x}}\omega^1_{-1}(z)\omega^1_{-1}(\tilde{z}),\label{ff}
\end{equation}
where the $\alpha_0(t)$ is inserted in the second term so that it is independent of the choice of the curve \eqref{curve}. This expression becomes helpful to see how a supersymmetric correction appears for $g=1$ loop equations.

\begin{remark}
One way of deriving \eqref{fff} is the following. Notice that the pole structure on the list \eqref{list} is independent of the choice of the potentials $V(x),\Psi(x)$. Thus, $\omega_{0,0|2}^{(0)}(|z,\tilde{z})$ should have the same expression in terms of the local coordinate $z$ no matter what the degrees of spectral curve \eqref{curve} and its super-partner \eqref{partner} are. In particular, we can find the explicit representation of $\omega_{0,0|2}^{(0)}(|z,\tilde{z})$ by the simplest example where all $g_k=0$. In this case, $V'(x)=1$, $y=z$, and $\mathcal{R}_{0,0|2}^{(0)}(|X,\tilde{X})$ precisely corresponds to \eqref{0W^1dx} divided by $-2$. Then, we obtain \eqref{fff} from directly calculating the loop equation \eqref{B23}.
\end{remark}

We indeed have obtained all the initail data required for the recursion of all other correlation functions, namely:
\begin{itemize}
\item the algebraic curve \eqref{curve},
\item the Grassmann-valued polynomial equation \eqref{partner},
\item the symmetric bilinear differential $\omega_{0,2|0}(z_1,z_2|)$ defined in \eqref{B},
\item the antisymmetric bilinear differential $\omega_{0,0|2}(|z,\tilde{z})$ defined in \eqref{fff}.
\end{itemize}
In the following sections, we will propose a new recursive formalism that compute all other correlation functions by these initial data.

\subsubsection{Multilinear Differentials in the Planar Limit}
We next study the pole structure of $\mathcal{W}_{0,n|2}^{(0)}(J|z,\tilde{z})$. We show a few observations and computational techniques that do not appear in the context of Hermitian matrix models. It turns out that all such techniques can be repeatedly used by induction for other genus zero and higher differentials, hence we present all of them in detail below. A summary is given in Proposition~\ref{prop:g=0} at the end of the section.

Let us define for $n\geq1$
\begin{equation}
\omega_{0,n|2}^{(0)}(J|z,\tilde{z})=\frac12\mathcal{W}_{0,n|2}^{(0)}(J|z,\tilde{z})dxd\tilde{x}\prod_{i=1}^ndx_i.
\end{equation}
The fermionic loop equation \eqref{FWgn2} for $g=0,n=1$ gives the equation for $\omega_{0,1|2}^{(0)}(z_1|z,\tilde{z})$ as
\begin{align}
&-\omega_{0,1|2}^{(0)}(z_1|z,\tilde{z})\nonumber\\
&=\frac{dxd\tilde{x}dx_1}{2xy(z)}\biggl(\frac{\mathcal{W}_{0,2|0}^{(0)}(z,J|)-\mathcal{W}_{0,2|0}^{(0)}(\tilde{z},J|)}{x-\tilde{x}}+\frac{\mathcal{W}_{0,2|0}^{(0)}(z,J|)}{2\tilde{x}}\nonumber\\
&\;\;\;\;+\frac{d}{dx_1}\frac{x\mathcal{W}_{0,0|2}^{(0)}(|z,\tilde{z})-x_1\mathcal{W}_{0,0|2}^{(0)}(|z_i,\tilde{z})}{x-x_1}+x\mathcal{W}_{0,2|0}^{(0)}(z,z_1|)\mathcal{W}_{0,0|2}^{(0)}(|z,\tilde{z})+\mathcal{R}_{0,n|2}^{(0)}(z_1|x,\tilde{z})\biggr)\label{W031}\\
&=-\frac{1}{2\omega_{0,1|0}^{(0)}(z|)}\left(\omega^{(0)}_{0,2|0}(z,z_1|)\omega^{(0)}_{0,0|2}(|\sigma(z),\tilde{z})+\omega^{(0)}_{0,2|0}(\sigma(z),z_1|)\omega^{(0)}_{0,0|2}(|z,\tilde{z})\right)+\frac{F_{0,1|2}^{(0)}(z_1|x,\tilde{z})dxd\tilde{x}dx_1}{2xy(z)},\label{W032}
\end{align}
where  we used \eqref{B1} and \eqref{B3} at the second equality and $F_{0,1|2}^{(0)}(x)$ is a function of $x$ given by
\begin{equation}
F_{0,1|2}^{(0)}(z_1|x,\tilde{z})=-\frac{\mathcal{W}_{0,2|0}^{(0)}(\tilde{z},z_1|)}{x-\tilde{x}}-\frac{d}{dx_1}\frac{x_1\mathcal{W}_{0,0|2}^{(0)}(|z_1,\tilde{z})}{x-x_1}+\mathcal{R}_{0,1|2}^{(0)}(z_1|x,\tilde{z}).
\end{equation}
In particular, $F(x)$ is regular at $x=0$ and $F(x)\sim x^{d_V-1}$ in the limit $x\rightarrow\infty$. In terms of the local coordinate $z$, the loop equation shows that $\omega^{(0)}_{0,1|2}(z_1|z,\tilde{z})$ may have poles at $z=\tilde{z},\sigma(\tilde{z}),z_1,\sigma(z_1)$, and $x=0,\infty,\alpha_i$, and further the regular ramification point (the irregular ramification point corresponds to $x=0$). We will show below that it only have a pole at the regular ramification point.

First of all, we apply the same argument as Remark~\ref{alpha} to find that it has no pole at $x=\alpha_i$ thanks to Corollary~\ref{coro:poly}. Notice that the last term involving $F_{0,1|2}^{(0)}(x)$ has no pole at $x=0$ because the combination $dx/xy(z)$ has no pole there. Also, $\omega^{(0)}_{0,0|2}(|z,\tilde{z})$ has a simple pole at $x=0$, but $\omega_{0,1|0}^{(0)}(z|)$ in the determinant has a double pole there, thus this ensures that $\omega_{0,1|2}^{(0)}(z_1|z,\tilde{z})$ has no pole at $x=0$. A key point is that the last term in \eqref{W031} contributes to a simple pole at $z=\pm1$, the poles of $x$, which includes the non-recursive polynomial $\mathcal{R}_{0,1|2}^{(0)}(z_1|x,\tilde{z})$ in $x$. Thus, the recursion does not seem to work.

We can indeed still proceed the recursion with the following trick. Notice in the expression \eqref{W032} that the right hand side is odd under the hyperelliptic involution, which implies 
\begin{equation}
\omega_{0,1|2}^{(0)}(z_1|z,\tilde{z})+\omega_{0,1|2}^{(0)}(z_1|\sigma(z),\tilde{z})=0.\label{involution2}
\end{equation}
This is similar to \eqref{involution1}. It is clear from \eqref{W031} that $\omega_{0,1|2}^{(0)}(z_1|z,\tilde{z})$ has no pole at $z\rightarrow\tilde{z}$ and at $z\rightarrow z_1$. Now \eqref{involution2} shows that $\omega_{0,1|2}^{(0)}(z_1|z,\tilde{z})$ has no pole either at $z\rightarrow\sigma(\tilde{z})$ nor at $z\rightarrow \sigma(z_1)$. More importantly, since $z$ and $\tilde{z}$ should be antisymmetric by definition, $\omega_{0,1|2}^{(0)}(z_1|z,\tilde{z})$ has simple zeros at $z\rightarrow\tilde{z}$ and $\sigma(\tilde{z})$. As a consequence, if we define another multilinear differential 
\begin{equation}
\hat{\omega}_{0,1|2}^{(0)}(z_1|z,\tilde{z})=\frac{\omega_{0,1|2}^{(0)}(z_1|z,\tilde{z})}{x-\tilde{x}},
\end{equation}
then $\hat{\omega}_{0,1|2}^{(0)}(z_1|z,\tilde{z})$ is still regular at $z\rightarrow\tilde{z},\sigma(\tilde{z})$. Furthermore, the denominator practically decreases the order of poles of $\omega_{0,1|2}^{(0)}(z_1|z,\tilde{z})$ at $x\rightarrow\infty$ by 1. In conclusion, we have shown that $\hat{\omega}_{0,1|2}^{(0)}(z_1|z,\tilde{z})$ has poles only at the regular ramification point in terms of $z$, and $\tilde{z}$.

In summary, \eqref{W032} can be written in terms of $\hat{\omega}_{0,1|2}^{(0)}(z_1|z,\tilde{z})$ and $\hat{\omega}_{0,0|2}^{(0)}(|z,\tilde{z})$ as
\begin{align}
\hat{\omega}_{0,1|2}^{(0)}(z_1|z,\tilde{z})=&\frac{1}{2\omega_{0,1|0}^{(0)}(z|)}\left(\omega^{(0)}_{0,2|0}(z,z_1|)\hat{\omega}^{(0)}_{0,0|2}(|\sigma(z),\tilde{z})+\omega^{(0)}_{0,2|0}(\sigma(z),z_1|)\hat{\omega}^{(0)}_{0,0|2}(|z,\tilde{z})\right)\nonumber\\
&+\;(\text{regular at $z=0$})\label{regular}
\end{align}
where
\begin{equation}
\hat{\omega}_{0,0|2}^{(0)}(|z,\tilde{z})=\frac{\omega_{0,0|2}^{(0)}(|z,\tilde{z})}{x-\tilde{x}}.
\end{equation}
Then, we apply the recursion kernel and the residue formula to \eqref{regular}, and we obtain:
\begin{equation}
\hat{\omega}_{0,1|2}^{(0)}(z_1|z,\tilde{z})=\underset{q\rightarrow0}{\text{Res}}\;K(z,q,\sigma(q))\left(\omega^{(0)}_{0,2|0}(q,z_1|)\hat{\omega}^{(0)}_{0,0|2}(|\sigma(q),\tilde{z})+\omega^{(0)}_{0,2|0}(\sigma(q),z_1|)\hat{\omega}^{(0)}_{0,0|2}(|q,\tilde{z})\right).\label{W012}
\end{equation}

A bonus is that we can use \eqref{W012} to investigate the pole structure in terms of $z_1$. Since $\omega^{(0)}_{0,2|0}(\sigma(q),z_1|)$ can only have a double pole at $z_1=0$, we learn from \eqref{W012} that the only possible pole of $\hat{\omega}_{0,1|2}^{(0)}(z_1|z,\tilde{z})$ in terms of the variable $z_1$ is also at $z_1=0$. Furthermore, recall from \eqref{B} and \eqref{ff} that 
\begin{align}
\omega^{(0)}_{0,2|0}(\sigma(q),z_0|)&=\omega^{(0)}_{0,2|0}(q,\sigma(z_0)|),\;\;\;\;\omega^{(0)}_{0,2|0}(q,z_0|)=\omega^{(0)}_{0,2|0}(\sigma(q),\sigma(z_0)|),\\
\omega^{(0)}_{0,0|2}(|\sigma(q),\tilde{z})&=\omega^{(0)}_{0,0|2}(|q,\sigma(\tilde{z})),\;\;\;\;\;\;\;\omega^{(0)}_{0,0|2}(|q,\tilde{z})=\omega^{(0)}_{0,0|2}(|\sigma(q),\sigma(\tilde{z})).
\end{align}
Then \eqref{W012} tells us that it satisfies
\begin{equation}
\hat{\omega}_{0,1|2}^{(0)}(\sigma(z_0)|z,\tilde{z})=\hat{\omega}_{0,1|2}^{(0)}(z_0|z,\sigma(\tilde{z}))=-\hat{\omega}_{0,1|2}^{(0)}(z_0|z,\tilde{z}).\label{involution3}
\end{equation}

Let us now define for $n\geq1$,
\begin{equation}
\hat{\omega}_{0,n|2}^{(0)}(J|z,\tilde{z})=\frac{\omega_{0,n|2}^{(0)}(J|z,\tilde{z})}{x-\tilde{x}}.
\end{equation}
Repeating the same steps, we can show by induction that $\hat{\omega}_{0,n|2}^{(0)}(J|z,\tilde{z})$ satisfies
\begin{equation}
\hat{\omega}_{0,n|2}^{(0)}(J|z,\tilde{z})+\hat{\omega}_{0,n|2}^{(0)}(J|\sigma(z),\tilde{z})=0,
\end{equation}
and these differentials have poles only at the regular ramification point. 

\begin{proposition}\label{prop:g=0}
For $n\geq1$, $\hat{\omega}_{0,n|2}^{(0)}(J|z,\tilde{z})$ is obtained by
\begin{align}
\hat{\omega}_{0,n|2}^{(0)}(J|z,\tilde{z})=&\underset{q\rightarrow0}{{\rm Res}}\;K(z,q,\sigma(q))\sum_{J_1\cup J_2=J}^*\biggl(\omega^{(0)}_{0,n_1+1|0}(q,J_1|)\hat{\omega}^{(0)}_{0,n_2|2}(J_2|\sigma(q),\tilde{z})\nonumber\\
&\hspace{45mm}+\omega^{(0)}_{0,n_1+1|0}(\sigma(q),J_1|)\hat{\omega}^{(0)}_{0,n_2|2}(J_2|q,\tilde{z})\biggr),
\end{align}
where $\sum^*$ excludes the term involving $\omega_{0,1|0}^{(0)}(z|)$. Moreover, $\hat{\omega}_{0,n|2}^{(0)}(J|z,\tilde{z})$ are odd under the hyperelliptic involution $\sigma$ in terms of every variable in $(J,z,\tilde{z})$
\end{proposition}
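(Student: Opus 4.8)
The plan is to prove both assertions simultaneously by induction on $n\geq1$, taking the explicit computation \eqref{W031}--\eqref{W012} for $\hat{\omega}_{0,1|2}^{(0)}(z_1|z,\tilde{z})$ as the base case. For the inductive step I would start from the $g=0$ fermionic loop equation \eqref{FWgn2} with $n$ bosonic insertions, isolate the prefactor $\frac{dx\,d\tilde{x}\prod_i dx_i}{2xy(z)}$, and split the bracket into the recursive quadratic piece $\sum_{J_1\cup J_2=J}\mathcal{W}_{0,n_1+1|0}^{(0)}\mathcal{W}_{0,n_2|2}^{(0)}$ and a non-recursive remainder. Using the involution identities \eqref{B1} and \eqref{B3} together with the inductive symmetry of the lower differentials, I would rewrite this so that the entire right-hand side is manifestly odd under $\sigma:z\mapsto\sigma(z)$, which immediately yields
\begin{equation}
\omega_{0,n|2}^{(0)}(J|z,\tilde{z})+\omega_{0,n|2}^{(0)}(J|\sigma(z),\tilde{z})=0,
\end{equation}
the direct generalization of \eqref{involution2}.

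Next I would carry out the pole analysis in $z$ exactly as in the $n=1$ case. The argument of Remark~\ref{alpha}, backed by Corollary~\ref{coro:poly}, rules out poles at $x=\alpha_i$; the factor $dx/(xy(z))$ has no pole at the irregular ramification point, and since $\omega_{0,1|0}^{(0)}(z|)$ sits in the denominator with a double pole at $x=0$ while the $|2$-differentials have at most a simple pole there, the would-be pole at $x=0$ cancels. The genuinely new feature, as in \eqref{W031}, is a simple pole at the poles $z=\pm1$ of $x$ produced by the non-recursive polynomial $\mathcal{R}_{0,n|2}^{(0)}$, which grows like $x^{d_V-1}$. Here I would invoke the $x-\tilde{x}$ division trick: antisymmetry forces $\omega_{0,n|2}^{(0)}$ to vanish at $z=\tilde{z},\sigma(\tilde{z})$, so passing to $\hat{\omega}_{0,n|2}^{(0)}=\omega_{0,n|2}^{(0)}/(x-\tilde{x})$ keeps regularity at those points while lowering the order of growth at $x\to\infty$ by one, thereby removing the spurious poles at $z=\pm1$. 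This leaves poles only at the regular ramification point $z=0$.

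With the pole structure pinned down, the standard Eynard--Orantin manipulation applies: expressing the recursive part through the kernel $K(z,q,\sigma(q))$ and using that the sum of residues on the Riemann sphere vanishes localizes everything to $q\to0$, which gives the claimed formula for $\hat{\omega}_{0,n|2}^{(0)}(J|z,\tilde{z})$. Finally, for the full antisymmetry I would read it off the recursion formula itself: each $x_i\in J$ appears in exactly one of the factors $\omega_{0,n_1+1|0}^{(0)}$ or $\hat{\omega}_{0,n_2|2}^{(0)}$, which are odd under $\sigma$ in each of their arguments by \eqref{involution1} and the inductive hypothesis, and the dependence on $\tilde{z}$ is likewise odd by induction, so oddness in every variable of $(J,z,\tilde{z})$ propagates through the recursion, exactly as in \eqref{involution3}.

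I expect the main obstacle to be the bookkeeping at the poles of $x$: verifying that after the $x-\tilde{x}$ division the contribution of the non-recursive polynomial $\mathcal{R}_{0,n|2}^{(0)}$ is genuinely killed at $z=\pm1$ and that no residue is secretly hidden at the irregular ramification point. This cancellation, driven by the supersymmetric correction built into $\omega_{0,0|2}^{(0)}$, is precisely what distinguishes the present recursion from the Hermitian case and has no counterpart in the standard topological recursion, so it is the step that most needs careful justification rather than appeal to existing matrix-model arguments.
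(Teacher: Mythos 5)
Your proposal is correct and follows essentially the same route as the paper: the paper's own proof is an induction that simply repeats, for general $n$, the detailed $n=1$ analysis of \eqref{W031}--\eqref{W012} (oddness under $\sigma$ from the loop equation, exclusion of poles at $x=\alpha_i$, $x=0$ and, after the $x-\tilde{x}$ division, at $z=\pm1$, followed by the residue/kernel localization at $q\to0$), which is exactly the sequence of steps you lay out. The extra symmetry in $J$ and $\tilde{z}$ is likewise read off the resulting recursion formula in both your argument and the paper's.
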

\begin{proof}
For $\hat{\omega}_{0,1|2}^{(0)}(z_0|z,\tilde{z})$, we have already proved that the proposition and corollary are correct. Then suppose both of them hold for $\hat{\omega}_{0,n+1|2}^{(0)}(z_0,J|z,\tilde{z})$ up to some $n$. Then repeating the argument presented for $\hat{\omega}_{0,1|2}^{(0)}(z_0|z,\tilde{z})$, we can easily clarify the proposition and corollary from their loop equations.
\end{proof}

\subsection{Higher Genus: Departure from the Eynard-Orantin Topological Recursion}

All multilinear differentials $\omega^{(0)}_{0,n+1|0}(z,J|)$ are obtained by the Eynard-Orantin topological recursion with the curve \eqref{curve}. For differentials with $g\geq1$, however, this does not hold anymore and we are in need of modifying the recursion formula. Let us first summarize the procedure how we investigated the pole structures for planar differentials $\omega^{(0)}_{0,n+1|0}(z,J|)$ and $\omega^{(0)}_{0,n|2}(J|z,\tilde{z})$:

\begin{remark}[Procedure of Pole Structure Analysis]\label{rem:list}\hfill
\begin{enumerate}
\item Choose a differential $\omega^{(0)}_{0,n+1|0}(z,J|)$ or $\omega^{(0)}_{0,n|2}(J|z,\tilde{z})$ and write down the associated loop equation from either \eqref{BWgn1} or \eqref{FWgn2},
\item Confirm with Corollary~\ref{coro:poly} that the differential of interest has no pole at $x=\alpha_i$ following the argument in Remark~\ref{alpha},
\item Show that the differential is odd under the hyperelliptic involution as extension of \eqref{involution1} and \eqref{involution3}. This implies that the differential has no pole at the coinciding points of variables,
\item Clarify that the differential has no pole at $x=0$ with the care for the terms involving $\omega^{(0)}_{0,0|2}(|z,\tilde{z})$ and $dx/xy(z)$,
\item Justify that $\omega^{(0)}_{0,n+1|0}(z,J|)$ or $\hat{\omega}^{(0)}_{0,n|2}(J|z,\tilde{z})$ have no pole at $x\rightarrow\infty$ by looking at the behaviour of $\mathcal{P}_{g,n+1|0}^{(0)}(x,J|)$ and $\mathcal{R}_{g,n|2}^{(0)}(J|x,\tilde{z})$. Note that $\omega^{(0)}_{0,n|2}(J|z,\tilde{z})$ do have poles at the poles of $x$,
\item Conclude that  $\omega^{(0)}_{0,n+1|0}(z,J|)$ and $\hat{\omega}^{(0)}_{0,n|2}(J|z,\tilde{z})$ only have poles at the regular ramification point $z=0$, 
\item Obtain the recursive formulae by applying the recursion kernel $K(z,q,\sigma(q))$ and by picking the residues as in Proposition~\ref{prop:g=0},
\item Check that $\hat{\omega}^{(0)}_{0,n+1|2}(z_0,J|z,\tilde{z})$ has poles only at $z_0=0$ by the recursive formula, and also that it is odd under the hyperelliptic involution in terms of $z_0$ as in Proposition~\ref{prop:g=0}.
\item Repeat these steps inductively.
\end{enumerate}
\end{remark}

We proceed with the study of pole structures for higher genus differentials following the list above. Thus, we will only present computational key points which have not appeared in the analysis of planar differentials, and skip the rest which should be easy to follow.

\subsubsection{Supersymmetric Corrections}
It turns out that for higher genus differentials, we only need to be careful for $\omega_{1,1|0}^{(0)}(z|)$ because a supersymmetric correction appears for it. The loop equation for $\omega_{1,1|0}^{(0)}(z|)$ becomes
\begin{equation}
-\omega_{1,1|0}^{(0)}(z|)=\frac{1}{2\omega_{0,1|0}^{(0)}(z|)}\left(-\frac{1}{2}\omega_{0,2|0}^{(0)}(z,\sigma(z)|)-\frac{x}{2}\hat{\omega}_{0,0|2}^{(0)}(|z,\sigma(z))+\mathcal{P}^{(0)}_{1,1|0}(x|)\right).
\end{equation}
Corollary~\ref{coro:poly} ensures that there is no pole at $x=\alpha_i(t)$. The first two terms have poles both at $z=0,\infty$, and it is easy to show that $\mathcal{P}^{(0)}_{1,1|0}(x|)$ does not contribute to any pole. Also, $x\,\omega_{0,0|2}^{(0)}(|z,\sigma(z))$ gives no pole at $x=\infty$ with $\omega_{0,1|0}^{(0)}(z|)$ in the denominator. Unlike the genus zero differentials, however, each of the first two terms has a pole at $z=\infty$. Therefore, one might infer at first glance that the residue formula would look like
\begin{equation}
\omega_{1,1|0}^{(0)}(z|)=\underset{q\rightarrow0,\infty}{{\rm Res}}\;K(z,q,\sigma(q))\left(\frac{1}{2}\omega_{0,2|0}^{(0)}(q,\sigma(q)|)+\frac{x(z)}{2}\hat{\omega}_{0,0|2}^{(0)}(|q,\sigma(q))\right).
\end{equation}
In particular, if we substitute \eqref{ff} in the above equation, we get
\begin{equation}
\omega_{1,1|0}^{(0)}(z|)=\underset{q\rightarrow0,\infty}{{\rm Res}}\;K(z,q,\sigma(q))\left(\omega_{0,2|0}^{(0)}(z,\sigma(z)|)-\frac{\alpha_0(t)}{16x}\left(\omega^1_{-1}(z)\right)^2\right).\label{W111}
\end{equation}
The first term is precisely the same as the Eynard-Orantin topological recursion and the second term can be thought of as a supersymmetric correction. Note that in the NS sector the second term vanishes due to Becker's formula, but this is not the case any more in the Ramond sector.

Now let us probe carefully the pole at $z=\infty$. The $\omega_{0,1|0}^{(0)}(z|)$ and $\omega_{0,2|0}^{(0)}(z,\sigma(z)|)$ at $p=1/z\rightarrow0$ would behave as
\begin{equation}
\frac{1}{\omega_{0,1|0}^{(0)}(p|)}\sim\frac{1}{dp},\;\;\;\;\omega_{0,2|0}^{(0)}(p,\sigma(p)|)\sim\frac{dpdp}{p^2}.
\end{equation}
Thus, their product has a double pole at $p\rightarrow0$, and as a result, $\omega_{1,1|0}^{(0)}(z|)$ would have a double pole at $z\rightarrow\infty$ ($p\rightarrow0$) \emph{if there were no second term in} \eqref{W111}. This is what happens if we do the Eynard-Orantin topological recursion at any Bessel-like (irregular) ramification point.

However, it turns out that the second term in \eqref{W111} precisely cancels the pole at $z=\infty$ coming from the first term. In fact if we explicitly use the parametrization \eqref{parametrization}, we can show that
\begin{equation}
\omega_{0,2|0}^{(0)}(z,\sigma(z)|)-\frac{\alpha_0(t)}{16x}\left(\omega^1_{-1}(z)\right)^2=-\frac{dzdz}{4z^2}+\frac{dzdz}{4(z^2-1)}=\frac{1}{4z^2(1-z^2)}dzdz.
\end{equation}
In particular, in the limit $p=1/z\rightarrow0$, it becomes regular
\begin{equation}
\frac{1}{2}\omega_{0,2|0}^{(0)}(z,\sigma(z)|)+\frac{x(z)}{2}\hat{\omega}_{0,0|2}^{(0)}(|z,\sigma(z))\sim dpdp.
\end{equation}
Therefore, $\omega_{1,1|0}^{(0)}(z|)$ has no pole at $z=\infty$. Note that it still has a pole at $z=0$, and the second term in \eqref{W111} also contribute to the pole. In conclusion, the recursive formula for $\omega_{1,1|0}^{(0)}(z|)$ is
\begin{equation}
\omega_{1,1|0}^{(0)}(z|)=\underset{q\rightarrow0}{{\rm Res}}\;K(z,q,\sigma(q))\left(\omega_{0,2|0}^{(0)}(z,\sigma(z)|)-\frac{\alpha_0(t)}{16x}\left(\omega^1_{-1}(z)\right)^2\right).
\end{equation}

\subsubsection{Formula for Higher Genus}

For any other $\omega_{1,n+1|0}^{(0)}(z,J|)$, the pole structure can be investigated by mostly following the procedure listed in Remark~\ref{rem:list}. The terms that do not appear for $g=0$ differentials are
\begin{equation}
\frac{1}{2}\frac{\omega_{0,n+2|0}^{(0)}(z,\sigma(z),J|)}{\omega_{0,1|0}^{(0)}(u|)}+\frac{x}{2}\frac{\hat{\omega}_{0,n|2}^{(0)}(J|z,\sigma(z))}{\omega_{0,1|0}^{(0)}(u|)}.\label{term1}
\end{equation}
We have shown in the previous section that all $\omega_{0,n+2|0}^{(0)}(z,†{z},J|)$ and $\hat{\omega}_{0,n|2}^{(0)}(J|z,\tilde{z})$ for $n\geq1$ have poles only at $z=0$. This implies that \eqref{term1} does not have poles at $z=\infty$. Thus, by induction in $n$, we can show that all $\omega_{1,n+1|0}^{(0)}(z,J|)$ have poles only at $z=0$  following  Remark~\ref{rem:list}. We thus apply the recursion kernel and get
\begin{align}
\omega_{1,n+1|0}^{(0)}(z,J|)=&\underset{q\rightarrow0}{\text{Res}}\;K(z,q,\sigma(q))\biggl(\frac12\omega_{0,n+2|0}^{(0)}(q,\sigma(q),J|)+\frac{x(q)}2\hat{\omega}_{0,n|2}^{(0)}(J|q,\sigma(q))\nonumber\\
&+\sum_{\substack{g_1+g_2=1\\J_1\cup J_2=J}}^*\omega^{(0)}_{g_1,n_1+1|0}(q,J_1|)\omega^{(0)}_{g_2,n_2+1|0}(\sigma(q),J_2|).\biggr)
\end{align}

To compute $\hat{\omega}_{1,n|2}^{(0)}(J|z,\sigma(z))$, one might wonder that terms with $\hat{\omega}_{0,2|0}^{(0)}(z,\sigma(z)|)$ would contribute to a simple pole at $z\rightarrow\pm1$ $(x\rightarrow\infty)$ and also at $z=\infty$ $(x\rightarrow0)$. However, since $\omega_{0,1|0}^{(0)}(z|)$ in the denominator gives a simple zero at $z=\pm1$ and a double zero at $z=\infty$, such terms do not contribute to poles at $z=\pm1,\infty$. Also, Proposition~\ref{prop:g=0} implies that $\hat{\omega}_{0,n+1|2}^{(0)}(\sigma(q),J|q,\tilde{z})$ has a pole only at $q=0$. Thus, if we follow Remark~\ref{rem:list}, we can show from the fermionic loop equation \eqref{FWgn2} that $\hat{\omega}_{1,n|2}^{(0)}(J|z,\sigma(z))$ has poles only at $z=0$, and the recursive formula can be written in terms of the residue at $z=0$ as
\begin{align}
\hat{\omega}_{1,n|2}^{(0)}(J|z,\tilde{z})=&\underset{q\rightarrow0}{\text{Res}}\;K(z,q,\sigma(q))\biggl(\hat{\omega}_{0,n+1|2}^{(0)}(\sigma(q),J|q,\tilde{z})\nonumber\\
&+\sum_{\substack{g_1+g_2=1\\J_1\cup J_2=J}}^*\Bigl(\omega^{(0)}_{g_1,n_1+1|0}(q,J_1|)\hat{\omega}^{(0)}_{g_2,n_2|2}(J|\sigma(q),\tilde{z})\nonumber\\
&\hspace{30mm}+\omega^{(0)}_{g_1,n_1+1|0}(\sigma(q),J_1|)\hat{\omega}^{(0)}_{g_2,n_2|2}(J|q,\tilde{z})\Bigr)\biggr).
\end{align}

The discussion above can be easily extended to differentials for all $g\geq1$ without any trouble by induction in $2g+n$. Thus, one can indeed obtain the recursive formula for all  $\omega_{g,n+1|0}^{(0)}(z,J|)$ and  $\hat{\omega}_{g,n|2}^{(0)}(J|z,\tilde{z})$. We summarize the formulae below:
\begin{align}
\omega_{g,n+1|0}^{(0)}(z,J|)=&\underset{q\rightarrow0}{\text{Res}}\;K(z,q,\sigma(q))\biggl(\frac12\omega_{g-1,n+2|0}^{(0)}(q,\sigma(q),J|)+\frac{x(q)}2\hat{\omega}_{g-1,n|2}^{(0)}(J|q,\sigma(q))\label{R1}\nonumber\\
&+\sum_{\substack{g_1+g_2=g\\J_1\cup J_2=J}}^*\omega^{(0)}_{g_1,n_1+1|0}(q,J_1|)\omega^{(0)}_{g_2,n_2+1|0}(\sigma(q),J_2|)\biggr),\\
\hat{\omega}_{g,n|2}^{(0)}(J|z,\tilde{z})=&\underset{q\rightarrow0}{\text{Res}}\;K(z,q,\sigma(q))\biggl(\hat{\omega}_{g-1,n+1|2}^{(0)}(\sigma(q),J|q,\tilde{z})\nonumber\\
&+\sum_{\substack{g_1+g_2=g\\J_1\cup J_2=J}}^*\Bigl(\omega^{(0)}_{g_1,n_1+1|0}(q,J_1|)\hat{\omega}^{(0)}_{g_2,n_2|2}(J|\sigma(q),\tilde{z})\nonumber\\
&\hspace{20mm}+\omega^{(0)}_{g_1,n_1+1|0}(\sigma(q),J_1|)\hat{\omega}^{(0)}_{g_2,n_2|2}(J|q,\tilde{z})\Bigr)\biggr)\label{R2}
\end{align}
Also, they are all odd under the hyperelliptic involution in terms of any variable as the extension of Proposition~\ref{prop:g=0}.

\begin{remark}
\eqref{R1} is different from the Eynard-Orantin topological recursion, hence this can be thought of as its generalization. A more abstract framework will be introduced in \cite{BO2}.
\end{remark}

\section{$\xi_l$-Dependent Differentials}

We have shown that all $\xi_l$-independent correlation functions can be recursively computed. However, they are defined not as power series in $1/x$, but as meromorphic differentials on the curve \eqref{curve}. As a result, the formulae \eqref{W1formula} and \eqref{W2formula} cannot be immediately applied. They are written as the residue in terms of $X$, hence we need to rewrite them in a way that we can pull them back to the curve similar to what is discussed in \cite{BO}.

\subsection{Supersymmetric Partner: Revisited}

 \cite{BO} presented a set of equations to compute the $\xi_l$-dependent differentials utilizing the supersymmetric curve for the NS sector (see Theorem 4.1 in \cite{BO}). An interesting question is whether analogous equations for the Ramond sector also exist. That is, we would like to find a Grassmann-valued meromorphic function $\hat{\gamma}$ such that $y,x,\hat{\gamma}$ satisfy a polynomial equation, and furthermore it computes the $\xi_l$-dependent differentials as follows
\begin{align}
\omega_{g,n|1}^{(1)}(J|z)&\overset{?}{=}\frac12\underset{\tilde{z}\rightarrow0}{{\rm Res}}\hat{\gamma}(\tilde{z})\omega^{(0)}_{g,n|2}(J|\tilde{z},z),\label{R5}\\
\omega_{g,n|0}^{(2)}(J|)&\overset{?}{=}\frac14\underset{\tilde{z}\rightarrow0}{{\rm Res}}\hat{\gamma}(\tilde{z})\hat{\gamma}(z)\omega^{(0)}_{g,n|2}(J|z,\tilde{z}).\label{R6}
\end{align}

A candidate of such $\hat{\gamma}(z)$ is $\gamma(z)$ or $x\gamma(z)$, defined in \eqref{partner} as the super partner of the curve \eqref{curve}. However, it turns out that these do not serve quite well. This is because in terms of a power series in $x$ and $1/x$, $x\gamma(x)$ obeys
\begin{equation}
\underset{x\rightarrow\infty}{{\rm Res}}x\gamma(x)\frac{\partial}{\partial\Psi(x)}dx=\underset{x\rightarrow\infty}{{\rm Res}}\left(x\mathcal{W}_{0,0|1}^{(1)}(|x)-\Psi(x)\right)\frac{\partial}{\partial\Psi(x)}dx=-\sum_{k\geq0}\xi_l\frac{\partial}{\partial\xi_l}-\eta\frac{\partial}{\partial\xi_0},\label{candidate}
\end{equation}
\begin{equation}
\eta=-\frac12\left(\xi_0-\frac{\partial\mathcal{F}^{(2)}_0}{\partial\xi_0}\right).\label{eta}
\end{equation}
The first term on the right hand side of \eqref{candidate} is proportional to the identity operator for the $\xi_l$-dependent functions and differentials, which is what we would like to have as we disucssed in \eqref{I0}. Our task is to modify $x\gamma(z)$ to get rid of the term proportional to Grassmann constant $\eta$.

\subsubsection{A Conservative Modification}

Let us consider non-radical conditions that we impose on such a modification. First, we require that the modified Grassmann-valued meromorphic function on the curve \eqref{curve},
\begin{equation}
\hat{\gamma}(z)=x\gamma(z)-\zeta(z),\label{hat}
\end{equation}
still satisfies a Grassmann-valued polynomial equation with $y$ and $x$. This means that a simple choice for $\zeta(z)$ is in the form
\begin{equation}
\zeta(z)=\frac{\text{Pol}^{(1)}(x)}{y(z)},
\end{equation}
where $\text{Pol}^{(1)}(x)$ is some Grassmann-valued polynomial of $x$. In particular, $\zeta(z)$ is odd under the hyperelliptic involution. Furthermore, we would like to have formulae in terms of the residue only at $z\rightarrow0$ as in \eqref{R5} and \eqref{R6}, which suggests that $\zeta(z)$ has poles only at $z=0$. These conditions restrict $\zeta(z)$ to be in the form
\begin{equation}
\zeta(z)=\frac{2\zeta_0}{y(z)}\prod_{i=1}^{d_V-1}(x-\alpha_i)=\frac{2\zeta_0}{z},
\end{equation}
where $\zeta_0$ is a Grassmann constant and we have
\begin{equation}
y(z)\hat{\gamma}(z)=V'(x)\Psi(x)-\mathcal{R}^{(1)}_{0,0|1}(|x)+2\zeta_0\prod_{i=1}^{d_V-1}(x-\alpha_i).\label{hat(poly)}
\end{equation}
In fact, if we take the residue at $x=\infty$ as a power series in $1/x$ by substituting \eqref{y} into $y$,
\begin{equation}
-\underset{x\rightarrow\infty}{{\rm Res}}\frac{2\zeta_0}{y(x)}\prod_{i=1}^{d_V-1}(x-\alpha_i)\frac{\partial}{\partial\Psi(x)}=\zeta_0\frac{\partial}{\partial\xi_0}\label{zeta}
\end{equation}
This implies that if we choose $\zeta_0=\eta$, then we have
\begin{equation}
\underset{x\rightarrow\infty}{{\rm Res}}\hat{\gamma}(x)\frac{\partial}{\partial\Psi(x)}dx=-\sum_{k\geq0}\xi_l\frac{\partial}{\partial\xi_l}.\label{I}
\end{equation}
This is proportional to the identity operator for $\xi_l$-dependent differentials.

Let us apply \eqref{I} to the $\xi_l$-dependent differentials. First, let us consider for $2g+n\geq1$, 
\begin{equation}
\underset{x(\tilde{z})\rightarrow\infty}{{\rm Res}}\hat{\gamma}(\tilde{z})\omega^{(0)}_{g,n|2}(J|\tilde{z},z)=-\omega^{(1)}_{g,n|1}(J|z).
\end{equation}
This is equivalent to \eqref{W1formula}. Note that since both $\hat{\gamma}(\tilde{z})$ and $\omega^{(0)}_{g,n|2}(J|\tilde{z},z)$ are odd under the hyperbolic involution, we can pull it back to the curve and evaluate the residues at $z=\pm1$:
\begin{equation}
\frac12\left(\underset{\tilde{z}\rightarrow1}{{\rm Res}}+\underset{\tilde{z}\rightarrow-1}{{\rm Res}}\right)\hat{\gamma}(\tilde{z})\omega^{(0)}_{g,n|2}(J|\tilde{z},z)=-\omega^{(1)}_{g,n|1}(J|z).
\end{equation}
Finally, recall that the sum of all residues vanishes on any compact Riemann surface and both $\hat{\gamma}(\tilde{z})$ and $\omega^{(0)}_{g,n|2}(J|\tilde{z},z)$ have poles only at $z=0,\pm1$. Thus, we get
\begin{equation}
\omega^{(1)}_{g,n|1}(J|z)=\frac12\underset{\tilde{z}\rightarrow0}{{\rm Res}}\hat{\gamma}(\tilde{z})\omega^{(0)}_{g,n|2}(J|\tilde{z},z).
\end{equation}
Similarly, we have
\begin{equation}
\omega^{(2)}_{g,n|1}(J|z)=\frac18\underset{z\rightarrow0}{{\rm Res}}\hat{\gamma}(z)\omega^{(1)}_{g,n|1}(J|z).
\end{equation}

\subsubsection{Relation Between $\gamma(z)$ and $\hat{\gamma}(z)$}

From the fermionic loop equation perspective, $\gamma(z)$ seems a fundamental meromorphic function as a super partner of $y(z)$. On the other hand, $\hat{\gamma}(z)$ is crucial to compute all the $\xi_l$-dependent differentials. Even though their relation is explicitly given in \eqref{eta} and \eqref{hat}, it would be helpful to find a more geometric understanding between them.

We would like to maintain the two conditions on $\hat{\gamma}(z)$, namely, (i) $y\hat{\gamma}$ gives a Grassmann-valued polynomial, and (ii) $\hat{\gamma}(z)$ has poles only at $z=0$. Therefore, with these constraints, the remaining question is how to geometrically obtain the constant $\zeta_0$. Then, motivated by \eqref{partner} and \eqref{B3}, we consider the following equation as a power series in $x,1/x$
\begin{equation}
\underset{\tilde{x}\rightarrow\infty}{{\rm Res}}\hat{\gamma}(\tilde{x})\left(\frac12\mathcal{W}_{0,0|2}^{(0)}(|x,\tilde{x})-\frac{x+\tilde{x}}{4x\tilde{x}(x-\tilde{x})}\right)dxd\tilde{x}=\frac12\left(\mathcal{W}_{0,0|1}^{(1)}(|x)-\frac{\Psi(x)}{x}\right)dx,
\end{equation}
where we used
\begin{equation}
\underset{\tilde{x}\rightarrow\infty}{{\rm Res}}\tilde{x}^l\frac{x+\tilde{x}}{2x\tilde{x}(x-\tilde{x})}d\tilde{x}=-\underset{\tilde{x}\rightarrow\infty}{{\rm Res}}\tilde{x}^l\frac{1}{x}\left(-\frac{1}{2\tilde{x}}+\sum_{k\geq0}\frac{x^k}{\tilde{x}^{k+1}}\right)d\tilde{x}=\frac{x^{l-1}}{1+\delta_{l,0}}.
\end{equation}
If we manipulate this result with \eqref{partner} and \eqref{B3}, we end up with
\begin{equation}
-\underset{x(z)\rightarrow\infty}{{\rm Res}}\hat{\gamma}(z)\left(\omega_{0,0|2}^{(0)}(|z,\tilde{z})-\omega_{0,0|2}^{(0)}(|\sigma(z),\tilde{z})\right)=\gamma(\tilde{z})d\tilde{x}.\label{hat2}
\end{equation}

Recall from \eqref{list} that $\omega_{0,0|2}^{(0)}(|z,\tilde{z})$ has simple poles at $z=\pm1,\infty,\tilde{z},\sigma(\tilde{z})$. Also, $\hat{\gamma}(z)$ has a simple pole at $z=0$, accordingly it has a simple zero at $z=\infty$. Then, since the integrand is even under the hyperelliptic involution, we pull the residue formula \eqref{hat2} back to the curve and evaluate it as:
\begin{align}
\gamma(\tilde{z})d\tilde{x}&=-\frac12\left(\underset{z\rightarrow1}{{\rm Res}}+\underset{z\rightarrow-1}{{\rm Res}}\right)\hat{\gamma}(z)\left(\omega_{0,0|2}^{(0)}(|z,\tilde{z})-\omega_{0,0|2}^{(0)}(|\sigma(z),\tilde{z})\right)\nonumber\\
&=\frac12\left(\underset{z\rightarrow \tilde{z}}{{\rm Res}}+\underset{z\rightarrow\sigma(\tilde{z})}{{\rm Res}}+\underset{z\rightarrow0}{{\rm Res}}\right)\hat{\gamma}(z)\left(\omega_{0,0|2}^{(0)}(|z,\tilde{z})-\omega_{0,0|2}^{(0)}(|\sigma(z),\tilde{z})\right)\nonumber\\
&=\hat{\gamma}(\tilde{z})\frac{d\tilde{x}}{\tilde{x}}+\frac12\underset{z\rightarrow0}{{\rm Res}}\;\hat{\gamma}(z)\left(\omega_{0,0|2}^{(0)}(|z,\tilde{z})-\omega_{0,0|2}^{(0)}(|\sigma(z),\tilde{z})\right)\nonumber\\
&=\gamma(\tilde{z})d\tilde{x}-\zeta(\tilde{z})\frac{d\tilde{x}}{\tilde{x}}+\frac12\underset{z\rightarrow0}{{\rm Res}}\;\hat{\gamma}(z)\left(\omega_{0,0|2}^{(0)}(|z,\tilde{z})-\omega_{0,0|2}^{(0)}(|\sigma(z),\tilde{z})\right).
\end{align}
Since $\zeta(z)$ is defined as \eqref{zeta}, we simply have
\begin{equation}
-\frac12\underset{z\rightarrow0}{{\rm Res}}\zeta(z)\left(\omega_{0,0|2}^{(0)}(|z,\tilde{z})-\omega_{0,0|2}^{(0)}(|\sigma(z),\tilde{z})\right)=\frac12\zeta(\tilde{z})\frac{d\tilde{x}}{\tilde{x}}.
\end{equation}
Thus, we arrive at the following equation
\begin{equation}
\zeta(\tilde{z})\frac{d\tilde{x}}{\tilde{x}}=\underset{z\rightarrow0}{{\rm Res}}\;x(z)\gamma(z)\left(\omega_{0,0|2}^{(0)}(|z,\tilde{z})-\omega_{0,0|2}^{(0)}(|\sigma(z),\tilde{z})\right).
\end{equation}

\subsection{Recursion in the Ramond Sector}

Let us summarize what we have shown so far. We first recall the initial data for the recursion.

\begin{definition}[Initial data for the recursion]\label{def:scurve}
Given a choice of polynomial potentials $V(x),\Psi(x)$ with a parameter $t$, we define a quintuple $\mathcal{S}=\{x,y,\gamma,\omega_{0,2|0}^{(0)},\omega_{0,0|2}^{(0)}\}$ where:
\begin{itemize}
\item $x,y$ defines an algebraic curve of genus zero as in \eqref{curve},
\item $x,y,\gamma$ satisfies a Grassmann-valued polynomial relation as in \eqref{partner},
\item We parametrize $x$ and $y$ as \eqref{parametrization},
\item $\omega_{0,2|0}^{(0)}$ is the bilinear differential of the second kind, that is,
\begin{equation}
\omega_{0,2|0}^{(0)}(z_1,z_2|)=B(z_1,z_2)=\frac{dz_1dz_2}{(z_1-z_2)^2}.
\end{equation}
\item $\omega_{0,0|2}^{(0)}$ is given by
\begin{align}
\omega_{0,0|2}^{(0)}(|z,\tilde{z})=\frac{(z+\tilde{z})(1-z\tilde{z})}{(z-\tilde{z})(1-z^2)(1-\tilde{z}^2)}dzd\tilde{z}.
\end{align}
\end{itemize}
\end{definition}

Starting with $\mathcal{S}$, let us define an infinite sequence of multilinear differentials on the algebraic curve \eqref{curve} in terms of correlation functions of a supereigenvalue model in the Ramond sector by
\begin{align}
\omega_{0,1|0}^{(0)}(z|)&=\frac12y(z)dx=\frac12\left(\mathcal{W}^{(0)}_{0,1|0}(z|)-V'(x)\right)dx,\\
\omega_{0,0|1}^{(0)}(|z)&=\frac12\gamma(z)dx=\frac12\left(\mathcal{W}^{(1)}_{0,0|1}(|z)-\Psi(x)\right)dx,\\
\omega_{0,2|0}^{(0)}(z_1,z_2|)&=\frac12\left(\mathcal{W}^{(0)}_{0,2|0}(z_1,z_2|)-\frac{1}{(x_1-x_2)^2}\right)dx_1dx_2,\\
\omega_{0,0|2}^{(0)}(|z,\tilde{z})&=\frac12\left(\mathcal{W}^{(0)}_{0,0|2}(|z,\tilde{z})-\frac{x+\tilde{x}}{x\tilde{x}(x-\tilde{x})}\right)dxd\tilde{x},\\
\omega_{g,n|m}^{(a)}(J|K)&=\frac12\mathcal{W}^{(a)}_{g,n|m}(J|K)\prod_{i=1}^{n+m}dx_i\;\;\;\;(2g+n+m+a\geq3).
\end{align}
We mean by this equality that the differentials reproduces the right hand side as power series in $1/x$ after Taylor expansion around $t=0$. The superscript $(a)$ denotes the dependence of fermionic couplings, hence those with nonzero $(a)$ are Grassmann-valued. In addition, we define the recursion kernel $K(z,q,\sigma(q))$, a set of multilinear differentials $\hat{\omega}^{(0)}_{g,n|2}(J|z,\tilde{z})$, and a Grassmann-valued meromorphic function $\hat{\gamma}$ as follows
\begin{align}
K(z,q,\sigma(q))&=\frac12\frac{\int_{\sigma(q)}^q\omega_{0,2|0}^{(0)}(z,\cdot|)}{\omega_{0,1|0}^{(0)}(z|)-\omega_{0,1|0}^{(0)}(\sigma(z)|)},\label{K1}\\
\hat{\omega}^{(0)}_{g,n|2}(J|z,\tilde{z})&=\frac{\omega^{(0)}_{g,n|2}(J|z,\tilde{z})}{x-\tilde{x}},\\
\hat{\gamma}(z)&=x\gamma(z)-\frac{x}{dx}\underset{\tilde{z}\rightarrow0}{{\rm Res}}\;x\,\gamma(\tilde{z})\left(\omega_{0,0|2}^{(0)}(|\tilde{z},z)-\omega_{0,0|2}^{(0)}(|\sigma(\tilde{z}),z)\right).\label{hat1}
\end{align}
Note that $y\cdot\hat{\gamma}$ also satisfies a polynomial equation as in \eqref{hat(poly)}.

\begin{theorem}\label{thm:main}
All the differentials $\omega_{g,n|m}^{(a)}(J|K)$ with $2g+n+m+a\geq3$ are obtained by the following set of formulae:
\begin{align}
\omega_{g,n+1|0}^{(0)}(z,J|)=&\underset{q\rightarrow0}{{\rm Res}}\;K(z,q,\sigma(q))\biggl(\frac12\omega_{g-1,n+2|0}^{(0)}(q,\sigma(q),J|)+\frac{x(q)}2\hat{\omega}_{g-1,n|2}^{(0)}(J|q,\sigma(q))\label{S1}\nonumber\\
&+\sum_{\substack{g_1+g_2=g\\J_1\cup J_2=J}}^*\omega^{(0)}_{g_1,n_1+1|0}(q,J_1|)\omega^{(0)}_{g_2,n_2+1|0}(\sigma(q),J_2|)\biggr),\\
\hat{\omega}_{g,n|2}^{(0)}(J|z,\tilde{z})=&\underset{q\rightarrow0}{{\rm Res}}\;K(z,q,\sigma(q))\biggl(\hat{\omega}_{g-1,n+1|2}^{(0)}(\sigma(q),J|q,\tilde{z})\nonumber\\
&\;\;\;\;\;\;\;\;+\sum_{\substack{g_1+g_2=g\\J_1\cup J_2=J}}^*\biggl(\omega^{(0)}_{g_1,n_1+1|0}(q,J_1|)\hat{\omega}^{(0)}_{g_2,n_2|2}(J|\sigma(q),\tilde{z})\nonumber\\&\;\;\;\;\;\;\;\;\;\;\;\;\;\;\;\;\;\;\;\;\;\;\;\;+\omega^{(0)}_{g_1,n_1+1|0}(\sigma(q),J_1|)\hat{\omega}^{(0)}_{g_2,n_2|2}(J|q,\tilde{z})\biggr)\biggr),\label{S2}\\
\omega_{g,n|2}^{(0)}(J|z,\tilde{z})&=(x-\tilde{x})\,\hat{\omega}_{g,n|2}^{(0)}(J|z,\tilde{z}),\\
\omega^{(1)}_{g,n|1}(J|z)&=\frac12\underset{\tilde{z}\rightarrow0}{{\rm Res}}\,\hat{\gamma}(\tilde{z})\omega^{(0)}_{g,n|2}(J|\tilde{z},z),\label{S3}\\
\omega^{(2)}_{g,n|0}(J|)&=\frac14\underset{z\rightarrow0}{{\rm Res}}\,\hat{\gamma}(z)\omega^{(1)}_{g,n|1}(J|z).\label{S4}
\end{align}
Moreover, all of them are odd under the hyperelliptic involution $\sigma$ in terms of every variable.
\end{theorem}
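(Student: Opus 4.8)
The plan is to establish the five families of formulae by a single induction on the weight $2g+n$, treating the pure-bosonic differentials $\omega^{(0)}_{g,n+1|0}$ and the two-fermion differentials $\hat{\omega}^{(0)}_{g,n|2}$ simultaneously, and carrying the oddness under $\sigma$ as part of the inductive hypothesis. The genus-zero statements are already secured: $\omega^{(0)}_{0,n+1|0}$ are produced by the $g=0$ Eynard-Orantin recursion derived at the end of the planar-limit discussion, and $\hat{\omega}^{(0)}_{0,n|2}$ by Proposition~\ref{prop:g=0}; together with the explicitly computed $\omega^{(0)}_{1,1|0}(z|)$ and the bilinear differentials of Definition~\ref{def:scurve}, these furnish the base of the induction. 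The cross-feeding of the two families (the supersymmetric correction $\hat{\omega}^{(0)}_{g-1,n|2}$ enters \eqref{S1}, while $\omega^{(0)}_{g_1,n_1+1|0}$ enters \eqref{S2}) is harmless because every right-hand-side input carries strictly smaller weight than the left-hand side.

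For the inductive step establishing \eqref{S1} and \eqref{S2}, I would fix a differential with $2g+n\geq 2$, write down its loop equation from \eqref{BWgn1} or \eqref{FWgn2}, and run the nine-step procedure of Remark~\ref{rem:list} verbatim. Concretely: Corollary~\ref{coro:poly} together with the reasoning of Remark~\ref{alpha} removes any pole at $x=\alpha_i$; the oddness of every source term forces the differential to be odd under $\sigma$, as in \eqref{involution1} and \eqref{involution3}, and thereby removes poles at coinciding points; the regularity of $dx/xy(z)$ at $z=\infty$ removes poles at $x=0$; and the degree bounds on $\mathcal{P}^{(0)}_{g,n+1|0}$ and $\mathcal{R}^{(0)}_{g,n|2}$ from the planar-limit section remove poles at $x=\infty$. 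With all poles confined to the regular ramification point $z=0$, the residue-kernel manipulation used in \eqref{W012} yields the recursion; the relation $\omega^{(0)}_{g,n|2}=(x-\tilde{x})\hat{\omega}^{(0)}_{g,n|2}$ is then definitional.

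The subtle and genuinely new step, and the one I expect to be the main obstacle, is the absence of a pole at the irregular ramification point $z=\infty$. For generic differentials the inductive hypothesis already guarantees that the lower-order inputs feeding into the new terms \eqref{term1} have poles only at $z=0$, so no pole at $z=\infty$ can arise. The delicate case is $\omega^{(0)}_{1,1|0}(z|)$, where the Bergman-kernel term $\omega^{(0)}_{0,2|0}(z,\sigma(z)|)$ and the supersymmetric correction $\tfrac{x}{2}\hat{\omega}^{(0)}_{0,0|2}(|z,\sigma(z))$ each carry a double pole at $z=\infty$ that must cancel. Here I would invoke the closed form \eqref{ff} for $\omega^{(0)}_{0,0|2}$ together with the parametrization \eqref{parametrization} to verify the explicit cancellation; the structural fact making this work is that the coefficient of $x^{d_V-1}$ in $\mathcal{R}^{(0)}_{0,0|2}$ is precisely \eqref{0W^1dx} divided by $-2$, which aligns the leading behaviours of the two terms. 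This cancellation is exactly what distinguishes the formalism from the Eynard-Orantin topological recursion, in which a Bessel-type point would retain such a pole.

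Finally, the $\xi_l$-dependent formulae \eqref{S3} and \eqref{S4} follow from the construction of $\hat{\gamma}$. Given the recursively determined $\omega^{(0)}_{g,n|2}$, I would use the identity-operator property \eqref{I} of $\hat{\gamma}$ together with a residue-pullback argument: since both $\hat{\gamma}(\tilde{z})$ and $\omega^{(0)}_{g,n|2}(J|\tilde{z},z)$ are odd under $\sigma$ and have poles only at $z=0,\pm 1$, the vanishing of the total residue sum on the Riemann sphere converts the residue at infinity into the residue at $z=0$, giving \eqref{S3}; a second application to $\omega^{(1)}_{g,n|1}$ yields \eqref{S4}. The oddness under $\sigma$ of these Grassmann-valued differentials is inherited from that of $\hat{\gamma}$ and $\omega^{(0)}_{g,n|2}$, which closes the induction and completes the proof.
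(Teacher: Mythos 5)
Your proposal follows essentially the same route as the paper: a base of the genus-zero recursions (the planar Eynard--Orantin formula and Proposition~\ref{prop:g=0}) followed by induction in $2g+n$ running the pole-structure procedure of Remark~\ref{rem:list}, with the explicit verification via \eqref{ff} and the parametrization \eqref{parametrization} that the supersymmetric correction cancels the would-be double pole of $\omega^{(0)}_{1,1|0}$ at the irregular ramification point, and finally the identity-operator property \eqref{I} of $\hat{\gamma}$ plus the vanishing of the total residue sum on the sphere to pull \eqref{S3} and \eqref{S4} back to $z=0$. This matches the paper's argument in both structure and the key technical points, so no further comparison is needed.
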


\section{Discussion and Future Work}\label{sec:discussion}

In the paper, we investigated various properties of supereigenvalue models in the Ramond sector. First, we proved that the free energy depends on Grassmann couplings only up to quadratic order. Then, with the assumption that the partition function and the free energy enjoy the $1/N$ expansion (Assumption~\ref{assumption}), we derived bosonic and fermionic loop equations. We probed the pole structures of correlation functions by analyzing the loop equations, and in particular we found the associated genus-zero algebraic curve (Proposition~\ref{prop:curve}) whose ramification points are one Airy-like, and one Bessel-like. Finally starting with a quintuple $\mathcal{S}$ (Definition~\ref{def:scurve}), we obtained a set of recursive formulae that solves the loop equations of supereigenvalue models in the Ramond sector (Theorem~\ref{thm:main}). Note that no differentials have a pole at the Bessel-like ramification point, unlike the Eynard-Orantin topological recursion, which can be thought of as a supersymmetric correction. There are several interesting open questions; hence, let us summarize them in the same order as the outline of the paper.

\begin{description}
\item[Truncation] It is an interesting consequence that both supereigenvalue models in the NS sector and Ramond sector hold this feature as proved in Proposition~\ref{prop:trun}. The technique used in the proof closely follows the one used for the NS sector originally shown in \cite{McArthur}. This requires careful permutations of indices of integration variables as well as the properties of the Pfaffian, but it has nothing to do with any supersymmetric algebra in the computation. Therefore, one may ask:
\begin{itemize}
\item \textit{Can we prove the truncation from a different point of view such that it conceptually relates to super Virasoro constraints?}
\item \textit{What if we consider multi-cut or multi-covered supereigenvalue models (if such models exist)? Does the number of cuts or coverings relate to the degree of the truncation? Following a similar spirit, is it possible to construct supereigenvalue models with more than one supersymmetries ($\mathcal{N}=2$ or $\mathcal{N}=4$)? How does the truncation of their free energy appear?}
\end{itemize}

\hfill
\item[$\boldsymbol{1/N}$ Assumption] This is a crucial assumption we made in the paper, and some propositions and theorems hold only subject to this assumption. Note that strictly speaking, all we need for Theorem~\ref{thm:main} is the $1/N$ expansion for the free energy, but not necessarily for the partition function. Our justification for this assumption was initially only the fact that the models are analogous to those in the NS sector. However, since it has been proven with the assumption that all correlation functions can be recursively computed, it also enhances the plausibility of the assumption. The remaining questions towards this direction are:
\begin{itemize}
\item \textit{Can we rigorously prove the $1/N$ expansion of the partition function and the free energy?}
\item \textit{If possible, can we further relate these models to some matrix models, and can we understand combinatoric perspectives?}
\end{itemize}

\hfill
\item[Correlation Functions] Recall that we defined the fermionic loop insertion operator  \eqref{Finsertion} as a power series in $1/X$. This is because we wanted to evaluate all correlation functions as meromorphic multilinear differentials on $\mathcal{S}$. In particular, $X^{\frac12}(z)$ is not a meromorphic function, which does not work well with many known techniques to derive topological recursion. 

However, what if we naively defined the fermionic loop insertion operator as the expansion of $1/X^{n+\frac12}$ for $n\geq0$? This modification would explain the somewhat bizarre $x(q)$ factor in \eqref{S1}. More explicitly, let us consider non-meromorphic multi-valued differentials $\Omega^{(1)}_{g,n|1}(J|z)$, $\Omega^{(0)}_{g,n|2}(J|z,\tilde{z})$, and $\hat{\Omega}^{(0)}_{g,n|2}(J|z,\tilde{z})$ by
\begin{align}
\Omega^{(1)}_{g,n|1}(J|z)&=x^{\frac12}\omega^{(0)}_{g,n|1}(J|z),\label{Omega1}\\
\Omega^{(0)}_{g,n|2}(J|z,\tilde{z})&=x^{\frac12}\tilde{x}^{\frac12}\omega^{(0)}_{g,n|2}(J|z,\tilde{z}),\label{Omega2}\\
\hat{\Omega}^{(0)}_{g,n|2}(J|z,\tilde{z})&=x^{\frac12}\tilde{x}^{\frac12}\hat{\omega}^{(0)}_{g,n|2}(J|z,\tilde{z}).\label{Omega3}
\end{align}
Then, \eqref{hat1} and \eqref{S1} would be schematically rewritten in a more symmetric form:
\begin{align}
\omega_{g,n+1|0}^{(0)}(z,J|)=&\underset{q\rightarrow0}{{\rm Res}}\;K(z,q,\sigma(q))\biggl(\frac12\omega_{g-1,n+2|0}^{(0)}(q,\sigma(q),J|)+\frac12\hat{\Omega}_{g-1,n|2}^{(0)}(J|q,\sigma(q))\nonumber\\
&+\sum_{\substack{g_1+g_2=g\\J_1\cup J_2=J}}^*\omega^{(0)}_{g_1,n_1+1|0}(q,J_1|)\omega^{(0)}_{g_2,n_2+1|0}(\sigma(q),J_2|)\biggr),\label{S5}\\
\hat{\Gamma}(z)&=\Gamma(z)-\frac{1}{2dx}\underset{\tilde{z}\rightarrow0}{{\rm Res}}\,\Gamma(\tilde{z})\left(\Omega_{0,0|2}^{(0)}(|\tilde{z},z)-\Omega_{0,0|2}^{(0)}(|\sigma(\tilde{z}),z)\right),\label{S6}
\end{align}
where $\Gamma(z)=x^{\frac12}\,\gamma(z), \hat{\Gamma}(z)=x^{-\frac12}\,\hat{\gamma}(z)$. We note that \eqref{S3} and \eqref{S4} still precisely hold by replacing  $\hat{\gamma}(z)$ and $\omega_{g,n|m}^{(a)}(J|K)$ with $\hat{\Gamma}(z)$ and $\Omega_{g,n|m}^{(a)}(J|K)$. In contrast, the recursion for $\hat{\Omega}_{g,n|2}^{(0)}(J|z,\tilde{z})$ will be modified from \eqref{S2} to get the correct non-meromorphic differentials, which we shall discuss shortly.

Besides the question of how to make sense of the multi-valued function $x^{\frac12}$ and differentials \eqref{Omega1}-\eqref{Omega3}, one may also ask:
\begin{itemize}
\item \textit{Since fermionic fields are normally interpreted as sections on a spin bundle in physics, would it make the formalism simpler if we define half-order differentials instead of standard differentials for $\omega^{(0)}_{g,n|2}(J|z,\tilde{z})$?}
\end{itemize}

\hfill
\item[Antisymmetric Bilinear Differentials]
Relating to the previous discussion, let us ignore the ambiguity of $x^{\frac12}$ for now and naively consider the antisymmetric bilinear differential $\Omega_{0,0|2}^{(0)}(|z,\tilde{z})$ as in \eqref{Omega2}. First of all, in terms of the local coordinate $z$, $\hat{\Omega}_{0,0|2}^{(0)}(|z,\tilde{z})$ is written as
\begin{align}
\hat{\Omega}_{0,0|2}^{(0)}(|z,\tilde{z})=\frac{1-z\tilde{z}}{(z-\tilde{z})^2(1-z^2)^{\frac12}(1-\tilde{z}^2)^{\frac12}}dzd\tilde{z}.\label{ffff}
\end{align}
Note that in the limit $z\rightarrow\tilde{z}$, we have
\begin{equation}
\hat{\Omega}_{0,0|2}^{(0)}(|z,\tilde{z})=\frac{dzd\tilde{z}}{(z-\tilde{z})^2}+\text{reg. at }z\rightarrow\tilde{z}.
\end{equation}
Thus, it behaves analogously to the Bergman kernel. 

Let us point out one more aspect which motivates us to consider a formalism with $x^{\frac12}$. Notice that
\begin{equation}
\hat{\Omega}_{0,0|2}^{(0)}(|z,v)=\frac{dz}{(1-z^2)^{\frac12}}dv\frac{d}{dv}\frac{(1-v^2)^{\frac12}}{z-v}.
\end{equation}
By using this relation, one can explicitly check the following
\begin{equation}
\int_{\sigma(q)}^q\hat{\Omega}_{0,0|2}^{(0)}(|z,\cdot)=\frac{x(z)^{\frac12}}{x(q)^{\frac12}}\int_{\sigma(q)}^q{\omega}_{0,2|0}^{(0)}(z,\cdot|).
\end{equation}
The right hand side is precisely what appears if we rewrite the recursion \eqref{S2} in terms of $\hat{\Omega}_{g,n|2}^{(0)}(J|z,\tilde{z})$. Therefore, let us naively define another recursion kernel $\hat{K}(z,q,\sigma(q))$ by
\begin{equation}
\hat{K}(z,q,\sigma(q))=\frac12\frac{\int_{\sigma(q)}^q\hat{\Omega}_{0,0|2}^{(0)}(|z,\cdot)}{\omega_{0,1|0}^{(0)}(z|)-\omega_{0,1|0}^{(0)}(\sigma(z)|)},
\end{equation}
then the recursive formula for $\hat{\Omega}_{g,n|2}^{(0)}(J|z,\tilde{z})$ would be schematically given by
\begin{align}
\hat{\Omega}_{g,n|2}^{(0)}(J|z,\tilde{z})=&\underset{q\rightarrow0}{{\rm Res}}\;\hat{K}(z,q,\sigma(q))\biggl(\hat{\Omega}_{g-1,n+1|2}^{(0)}(\sigma(q),J|q,\tilde{z})\nonumber\\
&\;\;\;\;\;\;\;\;+\sum_{\substack{g_1+g_2=g\\J_1\cup J_2=J}}^*\biggl(\omega^{(0)}_{g_1,n_1+1|0}(q,J_1|)\hat{\Omega}^{(0)}_{g_2,n_2|2}(J|\sigma(q),\tilde{z})\nonumber\\&\;\;\;\;\;\;\;\;\;\;\;\;\;\;\;\;\;\;\;\;\;\;\;\;+\omega^{(0)}_{g_1,n_1+1|0}(\sigma(q),J_1|)\hat{\Omega}^{(0)}_{g_2,n_2|2}(J|q,\tilde{z})\biggr)\biggr).\label{S7}
\end{align}
It is worth mentioning that the product of $\hat{K}(z,q,\sigma(q))$ and $\hat{\Omega}_{g,n|2}^{(0)}(J|q,\tilde{z})$ is well-defined meromorphic differentials in terms of $q$ so that the residue formula can still make sense. 

A bonus is that we could view supereigenvalue models in the NS sector and the Ramond sector from one single recursive formalism. Namely, we take \eqref{S5}, \eqref{S6}, and \eqref{S7} as the universal recursive formulae, and what we choose is the initial condition. For example, if we choose $\hat{\Omega}_{0,0|2}^{(0)}(|z_1,z_2)=B(z_1,z_2)$ in $\mathcal{S}$ as the initial condition instead of \eqref{ffff}, and if we apply these new formulae \eqref{S5} and \eqref{S7}, then we recursively find that $\hat{\Omega}_{g,n|2}^{(0)}(J,|z,\tilde{z})=\omega_{g,n+2|0}^{(0)}(z,\tilde{z},J|)$, which is a consequence of Becker's formula. In particular, \eqref{S5} reduces down to the Eynard-Orantin topological recursion as discussed in \cite{BO}. Furthermore, if we consider an analogous equation of \eqref{S6} in the NS sector, we find that the second term vanishes and it gives $\Gamma^{NS}(z)=\hat{\Gamma}^{NS}(z)$. Therefore, we exactly recover Theorem 4.1 in \cite{BO} for the NS sector from the same recursive formulae with a different initial condition.

\begin{itemize}
\item \textit{Is there any formalism that naturally unifies the recursion for both the NS and Ramond sector?}
\item \textit{What are geometric characteristics of $\hat{\Omega}_{0,0|2}^{(0)}(|z,\tilde{z})$?}
\end{itemize}

\hfill

All of the above arguments suggest that the definitions \eqref{Omega1}-\eqref{Omega3} seem to serve better for unification of the NS and Ramond sector. One way of dealing with $x^{\frac12}$ is to re-evaluate Theorem~\ref{thm:main} on a different algebraic curve which is defined by $y$ and $s$ where $s^2=x$, so that $s``="x^{\frac12}$ becomes a meromorphic function on the new curve. Accordingly, we find another natural involution $\rho$ that didn't appear in the previous formalism, namely $\rho$ sends $s\mapsto-s$ and $\Gamma\mapsto-\Gamma$ but keeps $y$ and $x=s^2$ invariant. Moreover, if we denote by $u$ a local coordinate of the new curve and by $B(u_1,u_2)$ the Bergman kernel defined on the new curve, then $\omega_{0,2|0}^{(0)}(u_1,u_2|)$ and $\hat{\Omega}_{0,0|2}^{(0)}(|u_1,u_2)$ are written in a nice form as:
\begin{align}
\omega_{0,2|0}^{(0)}(u_1,u_2|)=&B(u_1,u_2)+B(\rho(u_1),u_2),\\
\hat{\Omega}_{0,0|2}^{(0)}(|u_1,u_2)=&B(u_1,u_2)-B(\rho(u_1),u_2).
\end{align}

Motivated from this observation, we are currently constructing an abstract recursive framework which answers some of the questions raised above for unification. Since, however, this is quite different from the scope of this paper, we leave further discussions to a paper \cite{BO2}.

\hfill
\item[Super Partner Polynomial Equation] Let us come back to Theorem~\ref{thm:main} in terms of meromorphic differentials on $\mathcal{S}$. The super partner \eqref{partner} of the algebraic curve \eqref{curve} is utilized for computing the $\xi_l$-dependent differentials, though it is not as simple as the one for the NS sector. We note that the recursive formulae \eqref{S1} and \eqref{S2} hold without knowing any data of the super partner \eqref{partner}. Therefore, the $\xi_l$-independent differentials are the same for any choice of $\Psi(x)$. This leads us to a question what the super partner equation and $\xi_l$-dependent differentials are computing if we apply this formalism to examples beyond supereigenvalue models.

Also, our formalism is based on an ordinal algebraic curve of genus zero, and consider Grassmann-valued differentials on the curve. However, since the super partner curve is discovered in both the NS sector and the R sector, one may try to reformulate them with a super Riemann surface as the underlying geometry. For example, \cite{Tori} discusses that supertori are algebraic curves. We do not have a clear understanding in this aspect at the moment:

\begin{itemize}
\item \textit{What is the meaning of the super partner polynomial equation beyond supereigenvalue models?}
\item \textit{Can we reformulate the recursive equations with the concept of super Riemann surfaces?}
\end{itemize}

\hfill
\item[Super Airy Structures]

Recently, \cite{BCHORS} proposed another supersymmetric topological recursion from an algebraic point of view, what the authors call \emph{super Airy structures}, as a generalization of \cite{KS,ABCD}. Their starting point is a super Virasoro-\emph{like} constraint acting on a partition function, and they prove that there exists a unique partition function for a given set of super-Virasoro operators, and such a partition function can be recursively computed. Note that their super Virasoro-\emph{like} constraint is essentially different from that for supereigenvalue models introduced in Section~\ref{sec:SEM}. In the context of super Airy structures, supersymmetric corrections also appear only for genus one or higher, in a similar way to \eqref{S1}. It deserves further investigation to see whether their recursion is related to the recursion presented in this paper. \cite{BO2} will discuss more about this.

Even without supersymmetry, it is an interesting question whether we can consider the standard Airy structure that captures the $\xi_l$-independent recursive formulae \eqref{S1} and \eqref{S2}. However, note that the differentials obtained from \eqref{S2} are not fully symmetric among thier variables, hence we need a more generalized formalism than Airy structures. Potentially geometric recursion recently proposed in \cite{ABO} can be useful for realizing this recursion from a categorical perspective:

\begin{itemize}
\item \textit{Is it possible to interpret the recursion for the Ramond sector as an example of super Airy structures?}
\item \textit{Is geometric recursion helpful to understand the recursive formalism presented in Theorem~\ref{thm:main}?}
\end{itemize}

\end{description}

Besides the open questions above, there are a few more that we can pursue. First of all, it is interesting to see how the recursive structure appears in the Ramond-R type models introduced in \cite{C}. Since the explicit definition is given, similar computational techniques described in this paper might be applicable to these models. Also, \cite{C1,C2,C} discussed supereigenvalue models in length in the context of super quantum curves. It is certainly worth investigating more on those aspects.

A bit differently, Theorem~\ref{thm:main} shows that two systems of multilinear differentials mix up in their recursion. Since it is known that differentials obtained by the Eynard-Orantin topological recursion have some enumerative interpretation, we suspect that the differentials from Theorem~\ref{thm:main} can be also realized in terms of curve counting. Interesting discussions of behaviours of higher genus super Riemann surfaces and their moduli spaces can be found in \cite{Donagi,W1,W2,W3,W4}. It remains to be seen further investigation to see whether these arguments have relations to the new recursion shown in this paper (or perhaps the more abstract one in \cite{BO2}). 

Finally, \cite{Stanford:2019vob} recently considered a supersymmetric recursion for super JT gravity. Then a natural question is whether supereigenvalue models can realize 2d supergravity, presumably after taking some limit as analogous to the double scaling limit in Hermitian matrix models.

\newpage
\appendix

\section{Derivation of \eqref{Bloop}}\label{sec:derivation}

We derive the bosonic and fermionic loop equation in this section. The computational technique shown here closely follows Appendix B in \cite{BO}. For simplicity, we redefine $g_1+T\rightarrow g_1$ in this section. One can simply shift the explicit dependence of $g_1$ in the loop equation by $g_1\rightarrow g_1+T$ to recover the $T$ dependence.

\subsection{Bosonic Loop Equation}
As shown in Appendix B in \cite{BO}, the contribution from the first line of \eqref{L} is
\begin{equation}
-\frac{N}{t}V'(x)\mathcal{W}^{(0,2)}_{1|0}(x|)+\frac{1}{2}\mathcal{W}^{(0,2)}_{2|0}(x,x|)+\frac{1}{2}\mathcal{W}^{(0,2)}_{1|0}(x|)^2+\mathcal{P}^{(0,2)}_{1|0}(x|)
\end{equation}
where
\begin{equation}
\mathcal{P}^{(0,2)}_{1|0}(x|)=-\sum_{m\geq0}x^{m-1}\sum_{k\geq0}(m+k+1)g_{m+k+1}\frac{\partial}{\partial g_k}\mathcal{F}^{(0,2)}.
\end{equation}
To evaluate the second line of \eqref{L}, notice that we can manipulate
\begin{align}
\sum_{k,n\geq0}\frac{1}{x^{n+2}}\left(\frac{n}{2}+k\right)\xi_{k}\partial_{\xi_{n+k}}=&\frac12 \sum_{k,n\geq0}\left(-\frac{d}{dx}\frac{1}{x^{n+1}}+\frac{2k-1}{x^{n+2}}\right)\xi_k\partial_{\xi_{n+k}}\nonumber\\
=&\frac12 \sum_{m\geq0}\sum_{k=0}^m\left(-\frac{d}{dx}\frac{x^k}{x^{m+1}}+\frac{(2k-1)x^{k-1}}{x^{m+1}}\right)\xi_k\frac{\partial}{\partial \xi_{m}}\nonumber\\
=&\frac12 \left(\sum_{m,k\geq0}-\sum_{m\geq0}\sum_{k\geq m+1}\right)\left(-\frac{d}{dx}\frac{x^k}{x^{m+1}}+\frac{(2k-1)x^{k-1}}{x^{m+1}}\right)\xi_k\frac{\partial}{\partial \xi_{m}}\nonumber\\
=&\frac12\frac{d}{dx}\Psi(x)\left(\frac{\partial}{\partial\Psi(x)}-\frac{1}{2x}\frac{\partial}{\partial \xi_0}\right)\nonumber\\
&+\frac12\left(2\Psi'(x)-\frac{\Psi(x)}{x}\right)\left(-\frac{\partial}{\partial\Psi(x)}+\frac{1}{2x}\frac{\partial}{\partial \xi_0}\right)\nonumber\\
&-\sum_{l,m\geq0}\frac{l+2m+1}{2}x^{l-1}\xi_{l+m+1}\frac{\partial}{\partial \xi_{m}}\nonumber\\
=&\frac12\left(\Psi(x)\frac{d}{dx}\frac{\partial}{\partial\Psi(x)}-\Psi'(x)\frac{\partial}{\partial\Psi(x)}+\frac{\Psi(x)}{x}\frac{\partial}{\partial\Psi(x)}\right)\nonumber\\
&-\frac{1}{2}\sum_{l,m\geq0}\frac{l+2m+1}{1+\delta_{m,0}}x^{l-1}\xi_{l+m+1}\frac{\partial}{\partial \xi_{m}}.
\end{align}
Also, we have
\begin{align}
\frac{\partial}{\partial \Psi(x)}\frac{d}{dx}\frac{\partial}{\partial \Psi(x)}=&\frac{1}{2}\sum_{k\geq0}\left(\frac{k+1}{x^{k+3}}\frac{\partial}{\partial \xi_0}\frac{\partial}{\partial \xi_k}+\frac{1}{x^{k+3}}\frac{\partial}{\partial \xi_k}\frac{\partial}{\partial \xi_0}\right)-\sum_{k,l\geq0}\frac{l+1}{x^{k+l+3}}\frac{\partial}{\partial \xi_k}\frac{\partial}{\partial \xi_l}\nonumber\\
=&\frac{1}{2}\sum_{k\geq0}\frac{k}{x^{k+3}}\frac{\partial}{\partial \xi_0}\frac{\partial}{\partial \xi_l}-\sum_{n\geq0}\sum_{l=0}^{n}\frac{n-l+1}{x^{n+3}}\frac{\partial}{\partial \xi_l}\frac{\partial}{\partial \xi_{n-l}}\nonumber\\
=&\frac{2}{x}\sum_{n\geq0}\frac{1}{x^{n+2}}\left(\frac{n}{4}\frac{\partial}{\partial \xi_0}\frac{\partial}{\partial \xi_n}-\frac{1}{2}\sum_{l=0}^{n}\left(\frac{n}{2}-l\right)\frac{\partial}{\partial \xi_l}\frac{\partial}{\partial \xi_{n-l}}\right).
\end{align}
Putting all of them together, we arrive at the bosonic loop equation \eqref{Bloop}.

\subsection{Fermionic Loop Equation}
Similarly, we can have
\begin{align}
&\sum_{m,k\geq0}\left(\frac{\xi_k}{X^{m+1}}\frac{\partial}{\partial g_{r+k}}+\frac{kg_k}{X^{m+1}}\frac{\partial}{\partial \xi_{m+k}}\right)\nonumber\\
&=\sum_{m,k\geq0}\left(\frac{\xi_kx^{k}}{X^{m+k+1}}\frac{\partial}{\partial g_{m+k}}+\frac{kg_kx^k}{X^{m+k+1}}\frac{\partial}{\partial \xi_{m+k}}\right)\nonumber\\
&=\sum_{n\geq0}\sum_{k=0}^n\left(\frac{\xi_kX^{k}}{X^{n+1}}\frac{\partial}{\partial g_{n}}+\frac{kg_kX^k}{X^{n+1}}\frac{\partial}{\partial \xi_{n}}\right)\nonumber\\
&=\left(\sum_{n,k\geq0}-\sum_{n\geq0}\sum_{k\geq n+1}\right)\left(\frac{\xi_kX^{k}}{X^{n+1}}\frac{\partial}{\partial g_{n}}+\frac{kg_kX^k}{X^{n+1}}\frac{\partial}{\partial \xi_{n}}\right)\nonumber\\
&=-\Psi(x)\frac{\partial}{\partial V(X)}-XV'(X)\left(\frac{\partial}{\partial\Psi(X)}-\frac{1}{2X}\frac{\partial}{\partial \xi_0}\right)\nonumber\\
&\;\;\;\;-\sum_{n,l\geq0}X^l\left(\xi_{l+n+1}\frac{\partial}{\partial g_{n}}+(l+n+1)g_{l+n+1}\frac{\partial}{\partial \xi_{n}}\right).
\end{align}
Also,
\begin{align}
\frac{\partial}{\partial \Psi(X)}\frac{\partial}{\partial V(X)}&=\sum_{k,l\geq0}\frac{1}{X^{k+l+2}}\frac{\partial}{\partial \xi_l}\frac{\partial}{\partial g_k}-\frac12\sum_{k\geq0}\frac{1}{X^{k+2}}\frac{\partial}{\partial g_k}\frac{\partial}{\partial \xi_{0}}\nonumber\\
&=\frac{1}{X}\sum_{m\geq0}\frac{1}{X^{m+1}}\left(\sum_{k=0}^m\frac{\partial}{\partial \xi_k}\frac{\partial}{\partial g_{m-k}}-\frac12\frac{\partial}{\partial g_m}\frac{\partial}{\partial \xi_{0}}\right)
\end{align}
Thus, \eqref{Floop} indeed holds.

\newpage

\end{document}